\pgfplotsset{compat=newest}
\tikzset{new spy style/.style={
  spy scope={
    magnification=2,
    size=2cm, 
    connect spies,
    every spy on node/.style={
      rectangle, draw, thick}, 
    every spy in node/.style={
      draw, rectangle, fill=none} 
  }
}}
\theoremstyle{plain}
\theoremstyle{plain}
\newtheorem{assumption}{Assumption}
\newtheorem{definition}{Definition}
\newtheorem{remark}{Remark}
\newtheorem{lemma}{Lemma}
\newtheorem{corollary}{Corollary}
\newtheorem{theorem}{Theorem}
\DeclareMathOperator\proj{\mathtt{proj}}
\DeclareMathOperator{\diag}{diag}
\newcommand\scalemath[2]{\scalebox{#1}{\mbox{\ensuremath{\displaystyle #2}}}}
\definecolor{color1}{RGB}{0,128,0}      
\definecolor{color2}{RGB}{255,0,0}      
\definecolor{color3}{RGB}{0,0,255}      
\definecolor{color4}{RGB}{0,0,0}        
\definecolor{color5}{RGB}{255,140,0}    
\definecolor{color6}{RGB}{75,0,130}     
\definecolor{color7}{RGB}{220,20,60}    
\definecolor{color8}{RGB}{128,128,0}    
\definecolor{color9}{RGB}{0,139,139}    
\definecolor{color10}{RGB}{139,69,19}   
\definecolor{color11}{RGB}{128,0,128}   
\definecolor{color12}{RGB}{105,105,105} 
\definecolor{color13}{RGB}{0,139,139}     
\title{A Switched Systems Approach to Image-Based Feature Tracking for Autonomous Satellite Inspection}
\author{Tochukwu E. Ogri}
\affil{Department of Mechanical and Aerospace Engineering, University of Florida, Gainesville, FL, USA (e-mail: tochukwu.ogri@ufl.edu)}
\author{Muzaffar Qureshi}
\affil{Department of Mechanical and Aerospace Engineering, University of Florida, Gainesville, FL, USA (e-mail: muzaffar.qureshi@ufl.edu)}
\author{Zachary I. Bell}
\affil{Air Force Research Laboratory, Eglin, FL, USA (e-mail: zachary.bell.10@us.af.mil)}
\author{Wanjiku A. Makumi}
\affil{Air Force Research Laboratory, Eglin, FL, USA (e-mail: wanjiku.makumi@us.af.mil)}
\author{Kyle Volle}
\affil{Torch Technologies, Shalimar, FL, USA (e-mail: Kyle.Volle@torchtechnologies.com)}
\author{Rushikesh Kamalapurkar}
\affil{Department of Mechanical and Aerospace Engineering, University of Florida, Gainesville, FL, USA (e-mail: rkamalapurkar@ufl.edu)}
\begin{document}
\maketitle
\thispagestyle{empty}
\pagestyle{empty}

\begin{abstract} 
This paper presents an information-based guidance and control architecture for an autonomous deputy spacecraft tasked with inspecting a chief satellite in orbit. The primary objective is for the deputy spacecraft to maximize information gain while tracking features on the chief satellite. The deputy spacecraft needs to respect various constraints such as illumination, field-of-view (FOV), fuel limitations, and avoidance regions. Additionally, the absence of GPS information poses a significant challenge for relative localization within the space environment.
To learn the structure of the chief satellite while achieving relative self-localization and maximizing the information gain, this paper integrates a memory regression extension (MRE)-based distance observer with an information-maximizing adaptive controller. The distance observer utilizes feedback from a camera. A switched systems approach is used to determine the minimum dwell time required for a feature to remain within the FOV of the camera to ensure accurate estimation. A k-means clustering algorithm acts as a high-level planner to intermittently generate goal locations that guide the deputy spacecraft toward the nearest cluster of uninspected points on the chief satellite, subject to illumination and FOV constraints.
A Lyapunov-based stability analysis is conducted to analyze the developed architecture, and simulation results validate the theoretical results of the paper.
\end{abstract}
\begin{IEEEkeywords}
On-orbit inspection; spacecraft navigation; information gain maximization.
\end{IEEEkeywords}

\section{Introduction}\label{section:introduction}
Guidance and control of autonomous spacecraft carrying out the on-orbit inspection of satellites requires onboard sensors to generate a local map of the environment and also to localize the spacecraft relative to celestial objects in the local map for GPS-denied navigation. For Earth-orbiting spacecraft performing on-orbit servicing, GPS receivers and ground-based radar tracking are commonly used for navigation. In deep space, determining the position of a spacecraft is achieved using the Deep Space Network (DSN) and an onboard radio transponder \cite{SCC.Yost.Weston.ea2024}. Newer approaches under development explore celestial navigation methods that utilize optical detection of planets and X-ray pulsars to determine positional data \cite{SCC.Kapas.Bozoki.ea2021, SCC.Yost.Weston.ea2024}. However, despite significant advancements in algorithms for simultaneous localization and mapping over the last few decades \cite{SCC.Placed.Strader.ea2023}, motion planning for spacecraft guidance and control in complex space environments remains a challenging problem\cite{SCC.Fourie.Tweddle.ea2014, SCC.Nakka.Hoenig.ea2022, SCC.Starek.Schmerling.ea2017, SCC.Volpe.Circi.ea2022,SCC.Maestrini.Di.ea2022, SCC.Wijk.Dunlap.ea2023, SCC.Choi.Nakka.ea2023, SCC.McQuinn.Dunlap.ea2024, SCC.Dunlap.Hamilton.ea2024, SCC.Lin.Wu.ea2024,SCC.Akhihiero.Olawoye.ea2024}.

One approach to achieve relative orbital control during satellite inspection is an information-based guidance and control architecture \cite{SCC.Nakka.Hoenig.ea2022, SCC.Choi.Nakka.ea2023}, where an optimal control problem is formulated using a Bayesian-based information metric (see \cite{SCC.Schwager.Slotine.ea2011, SCC.Schwager.Julia.ea2011}) to maximize the information contained in the images captured by cameras mounted on each deputy spacecraft. While this Bayesian metric works well when the localization of the deputy spacecraft is independent of the camera, it does not readily extend to situations where the same camera must also estimate the pose of the deputy spacecraft relative to the chief satellite. In scenarios where the camera serves both inspection and localization purposes, the information gain is dependent on the trajectory of the camera and therefore cannot be accurately quantified from a single image.

In \cite{SCC.Tweddle.SaenzOtero.ea2015}, a photogrammetry-based scheme is combined with a multiplicative extended Kalman filter to estimate the relative pose of two spacecraft via a nonlinear least-squares formulation. In a related vein, \cite{SCC.Volpe.Circi.ea2022} employs an unscented Kalman filter to infer relative pose and reconstruct the shape of the target using data from a monocular camera together with a range sensor. More recently, reinforcement-learning controllers have been explored for satellite inspection \cite{SCC.Wijk.Dunlap.ea2023, SCC.McQuinn.Dunlap.ea2024, SCC.Dunlap.Hamilton.ea2024}, where proximal policy optimization (PPO) is adopted as the training method. The PPO-trained policies control a single free-flying deputy carrying an optical sensor to carry out on-orbit inspection of a chief spacecraft. Because training occurs offline, the resulting controller is not readily adaptable in cluttered, debris-rich environments, which are precisely the scenarios that demand real-time adaptation.

Feature tracking during satellite inspection allows the deputy spacecraft to learn the structure of the chief satellite. While prior works like \cite{SCC.Huang.Jia.ea2019, SCC.Sun.Huo.ea2018, SCC.Li.Song.ea2022} have achieved navigation during satellite inspection, these approaches often rely on the assumption that the deputy spacecraft has full state feedback information for state feedback control. Image-based observers can be used to reconstruct the 3D environment \cite{SCC.Parikh.Cheng.ea2017, SCC.Bell.Chen.ea2017, SCC.Bell.Deptula.ea2020} and enable visual servo control \cite{SCC.Hutchinson.Hager.ea1996} using observed features, even without full-state feedback. A significant challenge in using camera-based feedback is the intermittent loss of object visibility caused by occlusions, feature tracking failures, or limited field-of-view (FOV).

Due to illumination and FOV constraints during satellite inspection, tracking features on the chief satellite may become impossible. To maximize the observability of features on the chief satellite, FOV constraints must be incorporated into the controller design. While image-based control techniques, such as those presented in \cite{SCC.Zhao.Emami.ea2021}, can be used to adjust camera focal length to maintain target visibility, their approach is not feasible for low-cost cameras with fixed focal lengths. Other approaches like \cite{SCC.Zhao.Zhang.ea2022} handle FOV constraints by developing an offline path planner using the gradient of a so-called navigation function. However, offline control approaches are not suitable for real-world navigation, as the controller cannot adapt in real time to changes in the operating environment.

Most of the techniques outlined above rely on the exact knowledge of the shape of the chief satellite to carry out inspection tasks. However, to achieve the 3D reconstruction of the chief satellite using a monocular camera, the methodology developed in this paper feeds pixel coordinates of features on the chief satellite into a memory regression (MRE)-based observer which estimates the Euclidean distances to these features to create a local map of the environment which is then used for relative navigation of the deputy spacecraft \cite{arXivSCC.Qureshi.Ogri.ea2025}. 
To carry out the inspection task, this map is fed to an adaptive optimal control-based planning and control algorithm that aims to maximize information gain from the camera. The maximization of the information gain, in this context, refers to improving the conditioning of the regressor matrix used by the MRE-based observer to compute distance estimates to features on the chief satellite. 

This paper builds on our earlier work \cite{SCC.Ogri.Qureshi.ea2025} and extends it to handle features entering and leaving the FOV of the camera, enabling accurate distance estimates and observer updates under changing visibility conditions. In this paper, we develop a switched MRE observer and dwell time conditions that ensure that even after the feature leaves the FOV, a new feature is used to localize the deputy spacecraft within the orbital environment. A k-means clustering algorithm is used to create goal states to aid the inspection of new areas and the controller regulates the deputy spacecraft to the $i$-th goal state. To prevent the deputy spacecraft from straying too far or colliding with the chief satellite, a robust barrier function is used to enforce safety, unlike the more conservative approach in \cite{SCC.Ogri.Qureshi.ea2025}, where a constant gain was used to approximate the Lagrange multiplier. In this paper, the robust barrier function explicitly incorporates a bound on the distance estimation error, which is guaranteed under the proposed MRE observer. This integration ensures provably safe relative motion of the deputy spacecraft while performing on-orbit inspection of the chief satellite.


\section{Problem Formulation}\label{section:problemFormulation}
\begin{figure}
    \centering
\begin{tikzpicture}[scale=0.525, transform shape]
    \definecolor{corecolor}{RGB}{255, 180, 0} 
    \definecolor{outercolor}{RGB}{255, 204, 0} 
    \definecolor{raycolor}{RGB}{255, 150, 0}   
    \definecolor{darkgreen}{RGB}{0, 100, 0} 

    \tikzstyle{axes}=[->, thick]
    \tikzstyle{labels}=[font=\large]
    \tikzstyle{frame}=[draw, thick]
    \tikzstyle{features}=[fill=black, thick]
    \tikzstyle{dashedlines}=[thin, dashed, opacity=0.6]
    \tikzstyle{vector}=[->, thick, purple]
    \tikzstyle{orbit}=[dotted, thick, gray]
    \tikzstyle{cone}=[draw=blue!50, fill=blue!30, opacity=0.3] 
    \tikzstyle{bluedots}=[draw=blue, fill=red, thick] 

    \coordinate (cameraB) at (-8,11,0);
    \node[below, text=black, labels] at ($(cameraB)+(-0.3,0.5,0.0)$){$\mathcal{O}_{B}$}; 

    \draw[axes, red] (cameraB) -- ++(-1.5, -1.5, 0) node[left, labels]{$\vec{x}_{B}$}; 
    \draw[axes, green] (cameraB) -- ++(0, -2, 0) node[right, labels]{$\vec{y}_{B}$}; 
    \draw[axes, blue] (cameraB) -- ++(2, -1.5, -1) node[right, labels]{$\vec{z}_{B}$}; 

    \coordinate (cameraG) at (-4,7.75,-6);
    \node[below, text=black, labels] at ($(cameraG)+(0,0.1,-1)$){$\mathcal{O}_{G}$}; 

    \draw[axes, red, rotate around x=90, rotate around y=-90] (cameraG) -- ++(1,0,1.5) node[left, labels]{$\vec{x}_{G}$}; 
    \draw[axes, green] (cameraG) -- ++(0,-2,0) node[left, labels]{$\vec{y}_{G}$};
    \draw[axes, blue] (cameraG) -- ++(1.5, -1.5, 0) node[right, labels]{$\vec{z}_{G}$}; 

\begin{scope}
    \fill[gray!70, rounded corners=2pt, opacity=0.9] (3.25,3.2) rectangle (3.75,3.8);

    \fill[blue!40, opacity=0.6] (2.0,3.1) rectangle (3.25,3.9);
    \foreach \x in {2.15, 2.4, 2.65, 2.9, 3.15} {
        \draw[blue!60] (\x,3.1) -- (\x,3.9);
    }
    \foreach \y in {3.25, 3.5, 3.75} {
        \draw[blue!60] (2.0,\y) -- (3.25,\y);
    }

    \fill[blue!40, opacity=0.6] (3.75,3.1) rectangle (5.0,3.9);
    \foreach \x in {3.9, 4.15, 4.4, 4.65, 4.9} {
        \draw[blue!60] (\x,3.1) -- (\x,3.9);
    }
    \foreach \y in {3.25, 3.5, 3.75} {
        \draw[blue!60] (3.75,\y) -- (5.0,\y);
    }

    \draw[fill=gray!40, opacity=0.8] (3.5,3.0) circle [radius=0.15];
    \draw[thick] (3.5,3.2) -- (3.5,3.0);
\end{scope}

    \coordinate (hill) at (3.5,3.5);
    \node[below, text=black, labels] at ($(hill)+(0,-0.65)$){$\mathcal{O}_{H}$}; 
    \draw[axes, red] (hill) -- ++(2,0,0) node[right, labels]{$\vec{x}_{H}$}; 
    \draw[axes, green] (hill) -- ++(0,2,0) node[right, labels]{$\vec{y}_{H}$};
    \draw[axes, blue] (hill) -- ++(0,0,-2) node[right, labels]{$\vec{z}_{H}$};

    \coordinate (keyframe) at (-4.5,7,2.5);
    \node[below, text=black, labels] at ($(keyframe)+(-0.5,-0.15,0.0)$){$\mathcal{O}_{K}$}; 

    \draw[axes, red] (keyframe) -- ++(-1.5, 0, 0) node[left, labels]{$\vec{x}_{K}$}; 
    \draw[axes, green] (keyframe) -- ++(0, -1.5, 0) node[below, labels]{$\vec{y}_{K}$}; 
    \draw[axes, blue] (keyframe) -- ++(0, 0, -1.5) node[right, labels]{$\vec{z}_{K}$}; 

    \path let \p1 = (cameraB), \p2 = (cameraG) in 
          coordinate (splitGB) at ($(\p1)!0.4!(\p2)$);
    \draw[black, ->] (cameraB) -- (splitGB);
    \draw[dashedlines] (splitGB) -- (cameraG);
    \node[labels, above right] at (splitGB) {$\vec{u}_{g/b}(t)$ };

    \path let \p1 = (keyframe), \p2 = (cameraB) in 
          coordinate (splitKB) at ($(\p1)!0.4!(\p2)$);
    \draw[black, ->] (keyframe) -- (splitKB);
    \draw[dashedlines] (splitKB) -- (cameraB);
    \node[labels, above right] at (splitKB) {$\vec{u}_{b/k}(t)$ };

    \path let \p1 = (hill), \p2 = (keyframe) in 
          coordinate (splitKH) at ($(\p1)!0.4!(\p2)$);
    \draw[black, ->] (hill) -- (splitKH);
    \draw[dashedlines] (splitKH) -- (keyframe);
    \node[labels, above right] at (splitKH) {$\vec{u}_{k/h}$ };

    \path let \p1 = (hill), \p2 = (cameraB) in 
          coordinate (splitBH) at ($(\p1)!0.4!(\p2)$);
    \draw[black, ->] (hill) -- (splitBH);
    \draw[dashedlines] (splitBH) -- (cameraB);
    \node[labels, above right] at (splitBH) {$\vec{u}_{b/h}(t)$ };

    \path let \p1 = (hill), \p2 = (cameraG) in 
          coordinate (splitGH) at ($(\p1)!0.4!(\p2)$);
    \draw[black, ->] (hill) -- (splitGH);
    \draw[dashedlines, black] (splitGH) -- (cameraG);
    \node[labels, above right] at (splitGH) {$\vec{u}_{g/h}$ };

    \coordinate (earth) at (-11,1,-7);
    \shade[ball color=blue!50] (earth) circle (1);
    \node[below, labels] at (earth) {Earth};

    \draw[dashedlines] (earth) -- (hill) node[midway, below, labels]{$r_{c}$};

    \draw[orbit] (earth) ++(0,8) arc[start angle=-270, end angle=-360, x radius=11.75, y radius=9] node[midway, above, sloped, gray!60!black, labels] {Chief's Orbit};

    \foreach \i in {0, 45, 90, 135, 180, 225, 270, 315} {
        \draw[bluedots] ($(hill) + (\i:0.5cm)$) circle (2pt); 
    }

    \draw[dashed, magenta] ($(hill) + (225:0.5cm)$) -- ++(-1.3,-0.3);
    \draw[->, magenta] ($(hill) + (225:0.5cm)$) -- ++(-1.3,-0.3)
        node[left, labels] {Inspection point};

    \begin{scope}[shift={(3,10)}]
        \draw[corecolor, thick] (0,0) circle (1cm);
        \foreach \i in {0,30,...,330} {
            \draw[thick, raycolor] (\i:1cm) -- (\i:1.5cm);
        }
        \node at (0,-1.8) {Sun};
    \end{scope}

\end{tikzpicture}
    \caption{Schematic of the relative position of the deputy spacecraft $\mathcal{O}_{B}$, chief satellite $\mathcal{O}_{H}$, and the goal location $\mathcal{O}_{G}$ is shown in Hill’s reference Frame.}
    \label{fig:cameraFeatureTracking}
\end{figure}
\subsection{Coordinate Frames of Spacecraft Motion Model}
The center of mass of the deputy spacecraft is assumed to coincide with the principal point of the current image captured by an onboard camera. $\mathcal{O}_{B}$ denotes the origin of the body frame, $\mathcal{F}_{B}$, with the basis denoted as $\left\{\vec{x}_{B}, \vec{y}_{B}, \vec{z}_{B}\right\}$, where $\vec{x}_{B}$ is
aligned with the horizontal of the image plane (i.e., to the right in the image), $\vec{y}_{B}$ is aligned with the vertical of the image plane (i.e., downward in the image), and $\vec{z}_{B}$ is aligned with the normal to the image plane. The body frame at the initial time $t_{0}$ is designated as the key frame $\mathcal{F}_{K}$, with origin $\mathcal{O}_{K}$, at the principal point of the image and basis $\left\{\vec{x}_{K}, \vec{y}_{K}, \vec{z}_{K}\right\}$, where $\vec{x}_{K}$ points to the right along the horizontal axis of the image plane, $\vec{y}_{K}$ points downward along the vertical axis of the image plane, and $\vec{z}_{K}$ points to the chief satellite, aligned with the optical axis of the camera.
The dynamics of the relative motion between the deputy spacecraft and the chief satellite are represented in Hill’s reference frame $\mathcal{F}_{H}$ as illustrated in Figure~\ref{fig:cameraFeatureTracking}. The basis $\left\{\vec{x}_{H}, \vec{y}_{H}, \vec{z}_{H}\right\}$ and the origin $\mathcal{O}_{H}$ centered on the center of mass chief satellite moves in a circular orbit around the earth. $\vec{x}_{H}$ denotes the unit vector aligned with the line joining the center of the Earth and $\mathcal{O}_{H}$, directed away from the Earth, $\vec{y}_{H}$ points in the tangential direction of the motion of the chief satellite, and $\vec{z}_{H}$-axis completes the right-handed basis. Finally, let $\mathcal{F}_{G}$ represent a goal frame with its origin at $\mathcal{O}_{G}$, and the basis $\left\{\vec{x}_{G}, \vec{y}_{G}, \vec{z}_{G}\right\}$.

\subsection{Relative dynamics between the deputy spacecraft and chief satellite}

To define the relative dynamics between the deputy spacecraft and the chief satellite, it is assumed that the chief satellite is in a near-circular orbit around the Earth so that the angular velocity of the Hill's reference frame is equivalent to the orbital rate of the chief satellite. Additionally, it is assumed that the chief satellite and the deputy spacecraft start close to each other, in near-synchronous orbits, so that only small maneuvers are needed to achieve relative navigation. Under these conditions and assuming a near-spherical Earth, the relative motion between the deputy spacecraft and chief satellite can be computed using the Clohessy-Wiltshire (CW) equations \cite{SCC.Clohessy.Wiltshire.ea1960},
\begin{align}\label{eq:deputyDynamics1}
\left(\dot{\vec{v}}_{b/h}\right)_{x} &=  2n \left(\vec{v}_{b/h}\right)_{y} + 3n^{2}\left(\vec{p}_{b/h}\right)_{x} + \frac{\left(F\right)_{x}}{m},\\
\label{eq:deputyDynamics2}
\left(\dot{\vec{v}}_{b/h}\right)_{y} &= - 2n\left(\vec{v}_{b/h}\right)_{x} + \frac{\left(F\right)_{y}}{m},\\
\left(\dot{\vec{v}}_{b/h}\right)_{z}  &= - n^{2}\left(\vec{p}_{b/h}\right)_{z} + \frac{\left(F\right)_{z}}{m},\label{eq:deputyDynamics3}
\end{align}
where $\vec{p}_{b/h}$ and $\vec{v}_{b/h}$ represent the relative position and velocity of the deputy spacecraft relative to the chief satellite, respectively, coordinatized in $\mathcal{F}_{H}$, $n = \frac{GM_{\oplus}}{r_{0}^3}$ is the mean motion of the chief satellite, $GM_{\oplus}$ is the gravitational parameter of the Earth, $F \in \mathbb{R}^{3}$ is the net force applied by the thrusters on the deputy spacecraft at the center of mass, $r_{0}$ is the radius of the orbit of the chief satellite, and $m$ is the mass of the deputy spacecraft.

\subsection{Sun Dynamics}

Sun is the only light source assumed in this paper, confined to the $\vec{x}_{H} - \vec{y}_{H}$ plane in Hill's reference frame. The unit vector pointing from the center of the
chief satellite to the Sun is denoted as $\vec{u}_{s/h}(t)$, and computed as $\vec{u}_{s/h}(t) = \begin{bmatrix}
    \cos(\theta_{s}(t)) & \sin(\theta_{s}(t)) & 0
\end{bmatrix}^{\top}$ at any time $t$. Here $\theta_{s}(t)$ is the angle formed by the conical ray of the sun on the surface of the chief satellite as depicted in Figure~\ref{fig:sun}. The angle $\theta_{s}$ is dependent on the mean
motion of the chief satellite $n$ such that \cite{SCC.Wijk.Dunlap.ea2023, SCC.McQuinn.Dunlap.ea2024, SCC.Dunlap.Hamilton.ea2024}
\begin{equation}\label{eq:sunDynamics}
    \dot{\theta}_{s} = -n, \quad \theta_{s}(0) = \theta_{s_{0}}.
\end{equation} 
\begin{assumption}\label{ass:unObstructedRays}
 There are no obstructions blocking the light emanating from the sun from reaching the chief satellite at any time $t$. 
\end{assumption}
\begin{figure}
    \centering
\begin{tikzpicture}[rotate=-15, scale=0.6, transform shape]
    \definecolor{corecolor}{RGB}{255, 165, 0} 
    \definecolor{outercolor}{RGB}{255, 160, 0} 
    \definecolor{raycolor}{RGB}{255, 140, 0}   
    \definecolor{shadowcolor}{RGB}{0, 0, 0}    
    \definecolor{bluecolor}{RGB}{0, 0, 255}    
    \definecolor{orangecolor}{RGB}{255, 165, 0} 


\draw[ultra thick, color=gray!50] plot [smooth, tension=1] coordinates {(-2, -0.5) (3, 0.25) (8, -0.5)};

    \draw[->, dashed, color=black] (3, 0.25) -- (3, 3) node[midway, left] {$\vec{n}_{c}$};

    \begin{scope}[shift={(6.5, 5)}]
        \draw[corecolor, thick] (0,0) circle (0.5cm);
        \foreach \i in {0,30,...,330} {
            \draw[thick, raycolor] (\i:0.5cm) -- (\i:1cm);
        }
    \end{scope}

    \begin{scope}[shift={(3, 3.5)}, rotate around={-0.5:(5.5, 3)}]
        \draw[thick, fill=gray!20] (-0.2, -0.2) rectangle (0.2, 0.2);
        \draw[thick, fill=gray!20] (-0.2, -0.2) -- (-0.4, -0.4) -- (0.4, -0.4) -- (0.2, -0.2);
    \end{scope}

    \node[outercolor] at (6.1, 2.5) {$\vec{u}_{s/h}$}; 

\draw[dashed, color=gray!50] (3, 3.5) -- (-1, -0.2); 
\fill[bluecolor] (-1, -0.2) circle (0.1cm); 

\draw[dashed, color=gray!50] (3, 3.5) -- (0, 0.0); 
\fill[bluecolor] (0, 0.0) circle (0.1cm); 

\draw[dashed, color=gray!50] (3, 3.5) -- (1, 0.15); 
\fill[bluecolor] (1, 0.15) circle (0.1cm); 

\draw[dashed, color=gray!50] (3, 3.5) -- (2, 0.25); 
\fill[bluecolor] (2, 0.25) circle (0.1cm); 

\draw[dashed, color=gray!50] (3, 3.5) -- (4, 0.25); 
\fill[orangecolor] (4, 0.25) circle (0.1cm); 

\draw[dashed, color=gray!50] (3, 3.5) -- (5, 0.15); 
\fill[orangecolor] (5, 0.15) circle (0.1cm); 

\draw[dashed, color=gray!50] (3, 3.5) -- (6, 0); 
\fill[orangecolor] (6, 0) circle (0.1cm); 

\draw[dashed, color=gray!50] (3, 3.5) -- (7, -0.2); 
\fill[orangecolor] (7, -0.2) circle (0.1cm); 

\draw[->, thin, dashed, color=outercolor] (6.5, 5) -- (4, 0.25); 
\draw[->, thin, dashed, color=outercolor] (6.5, 5) -- (5, 0.15); 
\draw[->, thin, dashed, color=outercolor] (6.5, 5) -- (6, 0); 
\draw[->, thin, dashed, color=outercolor] (6.5, 5) -- (7, -0.2); 

\node at (0.5, 2) {$\vec{u}_{b/h}$}; 

    \draw[solid, thick, color=red!80] (3, 3) -- (0.5, {1 - (abs(0.5 - 3)/3)});
    \draw[solid, thick, color=red!80] (3, 3) -- (5.5, {1 - (abs(5.5 - 3)/3)});
    \draw[thick, color=red!80] (2.6, 2.45) arc[start angle=220, end angle=320, radius=0.5cm] node[midway, below] {$\alpha_{\text{FOV}}$};

    \draw[solid, thick, color=outercolor] (6.5, 5) -- (3, 0.25);
    \draw[solid, thick, color=outercolor] (6.5, 5) -- (8, -0.5);
    \draw[thick, color=outercolor] (5.75, 4) arc[start angle=230, end angle=300, radius=1cm] node[midway, below] {$\theta_{s}$};

\end{tikzpicture}
    \caption{Visualization of the light emanating from the sun on the points on the surface of the chief satellite: Yellow dots represent inspection points that are illuminated by the sun, while blue dots indicate points that are not illuminated.}
    \label{fig:sun}
\end{figure}

\subsection{Motion model of the camera using inspection points around the satellite}	
    The following assumptions are required to facilitate the development of a model of the camera attached to the deputy spacecraft.
\begin{assumption}\label{ass:trackableFeatures}
    The chief satellite has a set of $N > 0$ features, $\mathcal{N} = \{1,\hdots,N\}$, that can be identified, provided they are illuminated by the Sun and are within the FOV of the camera. Moreover, $\forall t \in \mathbb{R}_{\geq 0}$, there exists a set of at least four trackable planar illuminated features and at least five non-planar illuminated features on the chief satellite that are within the FOV of the camera. Furthermore, the camera attached to the deputy spacecraft always faces the chief satellite. 
\end{assumption}
\begin{remark}
      Feature extraction techniques like those described in \cite{SCC.Bouguet.ea2001, SCC.Lucas.Kanade1981} can be used to extract features from images of the chief satellite and these features can be tracked provided they are within the FOV of the camera. We require at least four trackable planar illuminated features and at least five non-planar illuminated features to compute the normal vector pointing outward from the centroid of the image-plane locations of the features currently within the FOV of the camera, denoted by $\vec{n}_{c}\in \mathbb{R}^{3}$.
      
      To ensure that the deputy spacecraft always faces the satellite, one can leverage onboard attitude control systems, such as reaction wheels, IMUs, and attitude control laws \cite{SCC.Chung.Bandyopadhyay.ea2013,SCC.Nakka.Hoenig.ea2022}.
\end{remark}
\begin{assumption}\label{ass:intrinsicMatrix}
    The intrinsic matrix of the camera, $\mathcal{A} \in \mathbb{R}^{3 \times 3}$, is known and invertible \cite{SCC.Ma.Soatto.ea2004}.
\end{assumption}
  Under Assumptions~\ref{ass:trackableFeatures} and \ref{ass:intrinsicMatrix}, the unit vector pointing from origin of the deputy spacecraft to the $i$-th feature of the chief, denoted by $\vec{u}_{b/h_{i}}(t) \in \mathbb{R}^{3}$, and the fixed unit vector pointing from the origin of the key frame to the $i$-th feature on the chief satellite, denoted by $\vec{u}_{k/h_{i}} \in \mathbb{R}^{3}$, can be obtained as $\vec{u}_{b/h_{i}}(t) =  \frac{\mathcal{A}^{-1}P_{b/h_{i}}(t)}{\big\|\mathcal{A}^{-1}P_{b/h_{i}}(t)\big\|}$ and $\vec{u}_{k/h_{i}} =  \frac{\mathcal{A}^{-1}P_{k/h_{i}}}{\big\|\mathcal{A}^{-1}P_{k/h_{i}}\big\|}$, respectively, where $P_{b/h_{i}}(t)$ and $P_{k/h_{i}}$ are the measured pixel coordinates of the $i$-th feature on the chief satellite relative to the deputy spacecraft and key frame in $\mathcal{F}_{H}$, respectively, represented as homogeneous coordinates. 
\begin{assumption}\label{ass:goalMeasurements}
    The position of the goal relative to the chief satellite is known to the deputy spacecraft.
\end{assumption}
\begin{remark}
    The unit vector pointing from the nearest cluster of uninspected points on the chief satellite to the goal, denoted by $\vec{u}_{g/h} \in \mathbb{R}^{3}$, is determined through k-means clustering \cite{SCC.MacQueen.ea1967}, which serves as a high-level planner. This unit vector is relayed intermittently by the k-means algorithm to the deputy spacecraft. Using the unit vector from the k-means algorithm, the position of the goal relative to the chief satellite, coordinatized in $\mathcal{F}_{H}$, is computed by the deputy spacecraft as $\vec{p}_{g/h} = \vec{u}_{g/h}r_{g/h} \in \mathbb{R}^{3}$, where $r_{g/h} > 0$ is a user-defined desired distance between the goal and the chief satellite.
\end{remark}

Under Assumptions~\ref{ass:trackableFeatures}--\ref{ass:goalMeasurements}, the positions of the deputy spacecraft, goal, key frame, and the $i$-th feature on the chief satellite relative to each other can be expressed as (see Figure~\ref{fig:cameraFeatureTracking})
\begin{equation}\label{eq:goalPosition}
    \vec{p}_{g/b}(t) =  \vec{p}_{g/h_{i}} - \vec{p}_{b/h_{i}}(t),
\end{equation}
and
\begin{equation}\label{eq:deputyKeyFrame}
    \vec{p}_{b/h_{i}}(t) - \vec{p}_{b/k}(t) = \vec{p}_{k/h_{i}},
\end{equation}
where $\vec{p}_{a/b}$ denotes position of $a$ relative to $b$, coordinatized in $\mathcal{F}_{H}$, for any $a$ and $b$. Note, all position and velocity vectors in this paper are coordinatized in $\mathcal{F}_{H}$. The relative position vectors between the deputy spacecraft, the key frame, the goal, and the $i$-th feature of the chief satellite can be expressed in terms of unit vector directions as
\begin{align}
    \vec{p}_{b/h_{i}}(t) &= r_{b/h_{i}}(t) \vec{u}_{b/h_{i}}(t),\\
    \vec{p}_{g/b}(t) &= r_{g/b}(t) \vec{u}_{g/b}(t),\\
    \vec{p}_{b/k}(t) &= r_{b/k}(t) \vec{u}_{b/k}(t),\\
    \vec{p}_{k/h_{i}} &= r_{k/h_{i}} \vec{u}_{k/h_{i}},
\end{align}
where the scalars $r_{b/h_{i}}(t) > 0$, $r_{g/b}(t) > 0$, $r_{b/k}(t) > 0$, and $r_{k/h_{i}} > 0$ are unknown Euclidean distances between the deputy spacecraft, key frame, chief satellite and the $i$-th point on the chief satellite. The unit vectors $\vec{u}_{b/h_{i}}(t), \vec{u}_{b/k}(t), \text{and } \vec{u}_{k/h_{i}} \in \mathbb{R}^{3}$ are known under Assumption~\ref{ass:intrinsicMatrix}, but the unit vector pointing from the goal to the deputy spacecraft $\vec{u}_{g/b}(t) \in \mathbb{R}^{3}$ is assumed unknown, which implies that $\vec{p}_{g/b}(t)$ is unknown.

\subsection{Localization Objective and Sequential Feature Tracking for Relative Navigation}\label{section:sequentialTracking}

The main objectives of the deputy spacecraft are to reach the goal and to localize itself relative to the goal, i.e., to generate estimates $\hat{\vec{p}}_{g/b}$ of $\vec{p}_{g/b}$, while maximizing information gain from the camera. 
In this paper, a robust observer is designed to estimate the needed distances. Due to transient behavior of the observer, convergence of $\hat{\vec{p}}_{g/b}$ to $\vec{p}_{g/b}$ requires that, on average, features used for localization remain in the FOV of the camera for a sufficient amount of time (referred to hereafter as the dwell-time).

If the chief satellite is large in size and motion of the deputy spacecraft relative to the chief satellite is fast, individual features may not stay within the FOV of the camera long enough for convergence of $\hat{\vec{p}}_{g/b}$ to $\vec{p}_{g/b}$  within a desired error bound $\epsilon > 0$\footnote{The estimated relative position vector $\hat{\vec{p}}_{a/b}(t) \in \mathbb{R}^{3}$ can be computed as $\hat{\vec{p}}_{a/b}(t) =  \hat{r}_{a/b}(t) \vec{u}_{a/b}(t)$, where $r_{a/b}(t) \in \mathbb{R}_{\geq 0}$ is the estimated distance between $a$ and $b$, and $\vec{u}_{a/b}(t) \in \mathbb{R}^{3}$ is the unit vector pointing from $a$ to $b$.}. To account for features leaving the FOV of the camera, we design a sequential tracking approach based on daisy-chaining distance estimates obtained from multiple features as they enter and leave the FOV.


To continue improving the estimate $\hat{\vec{p}}_{g/b}(t)$ and ultimately achieve the desired $\epsilon$-accuracy, the camera begins tracking a new feature as it enters its FOV and switches to the new feature once the dwell time is met and the current feature has left the FOV. By chaining together local position estimates from overlapping tracking intervals, the observer achieves convergence of $\hat{\vec{p}}_{g/b}(t)$ within the desired error tolerance.

Figure~\ref{fig:dwellTime} illustrates the sequential feature tracking process. Let $t_{i}^{a} \in \mathbb{R}_{\geq 0}$ denote the time at which tracking begins for feature $i$ on the chief satellite, $t_{i}^{d} \in \mathbb{R}_{\geq 0}$ the time when the dwell time condition for feature $i$ is satisfied, and $t_{i}^{u} \in \mathbb{R}_{\geq 0}$ the time at which tracking of feature $i$ ends. Letting $t_{1}^{a} = 0$, the relative position $\vec{p}_{k/h_{1}}$ is estimated over the interval $[0, t_{1}^{u}]$, and used to compute $\hat{\vec{p}}_{g/b}(t)$. For the subsequent interval $[t_{1}^{u}, t_{1}^{d}]$, the estimate $\hat{\vec{p}}_{k/h_{1}}(t_{1}^{u})$ is held constant and continues to be used to determine $\hat{\vec{p}}_{g/b}(t)$.

Subsequently, as depicted in Figure~\ref{fig:dwellTime}, when a second feature becomes visible at time $t_{2}^{a} \in \mathbb{R}_{\geq 0}$ (which may occur before $t_{1}^{u}$), the relative position $\vec{p}_{k/h_{2}}(t)$ is estimated over the interval $[ t_{2}^{a}, t_{2}^{d}]$, where $t_{2}^{d}$ is selected to achieve an error smaller than that from feature $1$, i.e., $\|\tilde{\vec{p}}_{k/h_{2}}(t_{2}^{d})\| \leq \|\tilde{\vec{p}}_{k/h_{1}}(t_{1}^{u})\|$. Since the error $\tilde{\vec{p}}_{k/h_{1}}(t)$ is not measurable, we employ a Lyapunov-based dwell time condition that relies on known bounds on the error, estimated using decay rates and initial conditions, to verify whether the previous inequality is satisfied. Once the dwell time is satisfied, estimates of $\vec{p}_{k/h_{2}}$ are used to further refine the estimate $\hat{\vec{p}}_{g/b}(t)$ over the interval $[ t_{2}^{d}, t_{2}^{u}]$. This process repeats across successive features until the position estimation error $\tilde{\vec{p}}_{g/b}(t)$ is within the specified $\epsilon$-error bound.
\begin{figure}
\centering
\begin{tikzpicture}
\begin{axis}[
    axis lines=middle,
    xlabel={},
    ylabel={$\|\tilde{\underline{p}}_{g/b}(t)\|$},
    xmin=0, xmax=10,
    ymin=0, ymax=8.5,
    xtick={0,1.0,2.0,3.0,4.5,6,7.35,8.0,10},
    xticklabels={$t_1^a$, $t_2^a$, $t_1^u$, $t_3^a$, $t_2^d$, $t_2^u$, $t_3^d$, $\dots$, $t$},
    ytick={0.5},
    yticklabels={$\epsilon$},
    width=9cm,
    height=6cm,
    tick style={black},
    tick label style={font=\footnotesize},
    label style={font=\footnotesize},
    axis line style={->, thick},
    grid=none,
    clip=false,
    every axis plot/.append style={line width=1pt}
]

\addplot[dashed, gray!80] coordinates {(0,0.5) (10,0.5)};

\addplot[red!80!black, dashed] coordinates {(0,3.5) (2,2.0) (4.5,2.0) (6,2.0) (8.0,2.0) (10,2.0)};
\node[red!80!black, font=\scriptsize, anchor=west] at (axis cs:0.2,3.5) {$h_1$};

\addplot[green!50!black, dashed] coordinates {(1.0,4.25) (3,3.0) (4.5,2.0) (6,1.0) (8.0,1.0) (10,1.0)};
\node[green!50!black, font=\scriptsize, anchor=west] at (axis cs:1.1,4.3) {$h_2$};

\addplot[orange!80!black, dashed] coordinates {(3,4.5) (4.5,3.3) (6,2.0) (8.0,0.5) (10,0.5)};
\node[orange!80!black, font=\scriptsize, anchor=west] at (axis cs:3.1,4.5) {$h_3$};

\addplot[blue, thick, mark=*, mark size=1.5pt] coordinates {
    (0,3.5)
    (2.0,2.0)
    (3.0,2.0)
    (4.5,2.0)
    (6.0,1.0)
    (7.35,1.0)
    (8.00,0.5)
    (10.0,0.5)
};

\foreach \x in {1.0,2,3,4.5,6,7.35,8.0} {
    \addplot[gray!50, dashed] coordinates {(\x,0) (\x,7.5)};
}

\draw[decorate, decoration={brace, amplitude=6pt}, yshift=6pt]
    (axis cs:1.0,4.5) -- (axis cs:4.5,4.5)
    node[midway, yshift=9pt, green!50!black, font=\scriptsize] {dwell time $\Delta t_{2}^{a}$};

\draw[decorate, decoration={brace, amplitude=6pt}, yshift=6pt]
    (axis cs:3.0,5.5) -- (axis cs:7.35,5.5)
    node[midway, yshift=9pt, orange!80!black, font=\scriptsize] {dwell time $\Delta t_{3}^{a}$};
\end{axis}

\end{tikzpicture}
\caption{Sequential feature-based localization with feature switching upon FOV exit after satisfying dwell time.}
\label{fig:dwellTime}
\end{figure}

To achieve the aforementioned objectives, this paper develops an MRE-based distance observer to estimate the goal position relative to the camera, $\vec{p}_{g/b}(t)$, while features on the chief satellite enter and leave the FOV of the camera attached to the deputy. The dwell time condition is developed under which the observer switches to new features to improve estimates of $\vec{p}_{g/b}(t)$. The estimates are then used to simultaneously synthesize and utilize a controller that maximizes information gain, as quantified in the constrained optimal problem in \eqref{eq:coop}, while ensuring local uniform boundedness of the trajectories of the closed-loop system.
\begin{remark}
If the time instance $t_{i}^{d}$ for a feature $i$ exceeds the time $t_{i}^{u}$ at which the feature $i$ exits the field of view, we disregard this feature for localization purposes. 
\end{remark}

\section{Observer Update Laws For Estimating Euclidean Distances}\label{section:MREObserver}
In this section, an MRE-based distance observer is developed to estimate the unknown Euclidean distances $r_{b/h_{i}}(t)$, $r_{b/k}(t)$, and $r_{k/h_{i}}$. The expression in \eqref{eq:deputyKeyFrame} be equivalently written as
\begin{equation}\label{eq:initialDistanceRelationShip}
    Y_{i}(t)\begin{bmatrix} r_{b/h_{i}}(t) \\ r_{b/k}(t)\end{bmatrix} = \vec{u}_{k/h_{i}}r_{k/h_{i}}.
\end{equation}
where $Y_{i}(t) = \begin{bmatrix}
\vec{u}_{b/h_{i}}(t) & -\vec{u}_{b/k}(t)
\end{bmatrix} \in \mathbb{R}^{3 \times 2}$. Since illuminated features on the chief satellite may leave the FOV of the camera as the deputy spacecraft moves towards the goal location, the matrix $Y_{i}(t)$ in \eqref{eq:initialDistanceRelationShip} may become unmeasurable. 
 To account for illuminated features of the chief satellite entering or leaving the FOV of the camera, let $\mathcal{S}(t) \subseteq \mathcal{N}$ represent the set
of illuminated features within the FOV of the camera at time $t$, given by 
\begin{equation}
\scalemath{0.98}{\mathcal{S}(t) = \left\{i \in \mathcal{N} \,\middle|\, 
\begin{aligned}
&r_{\min} \leq \|r_{b/h_{i}}(t)\| \leq r_{\max}, \, \\
&\cos^{-1}\left( \frac{\langle \vec{p}_{b/h_{i}}(t), \vec{n}_{c} \rangle}{\|\vec{p}_{b/h_{i}}(t)\| \|\vec{n}_{c}\|} \right) \leq \alpha_{\text{FOV}}
\end{aligned}
\right\}},
\end{equation}
where the cone angle $\alpha_{\text{FOV}}$ is the conical FOV of the camera as described in Figure~\ref{fig:sun}, $r_{\min} \in \mathbb{R}_{>0}$ denotes the minimum allowable safe distance between the deputy spacecraft and the chief satellite, $r_{\max} \in \mathbb{R}_{> 0}$ is maximum sensing range of the camera, $\vec{n}_{c} \in \mathbb{R}^{3}$ is a unit vector normal to the surface of the chief satellite emanating from the centroid of the cluster of features that are currently within the FOV of the camera and the notation $\langle \cdot, \cdot \rangle$ denotes the dot product. Let the minimum allowable safe distance between the deputy spacecraft and the chief satellite be defined as $r_{\min} \coloneqq r_{d} + r_{c}$, where $r_{d}$ and $r_{c} > 0$ are the radii of the deputy spacecraft and chief satellite, respectively. Under the controller designed in Section~\ref{section:controlDesign}, the deputy spacecraft is guaranteed to never collide with or stray too far from the chief satellite. Hence, the distance between the deputy spacecraft and the $i$-th feature on the chief satellite satisfies $r_{\min} \leq r_{b/h_{i}}(t) \leq r_{\max}$ for all $i \in \mathcal{N}$.
The complement of $\mathcal{S}(t)$, given by $\mathcal{S}^{c}(t) = \mathcal{N}\setminus\mathcal{S}(t)$, represents the set of illuminated features outside the FOV of the camera. Let the time interval $\Delta t_{i}^{a} = t_{i}^{u} - t_{i}^{a} > 0$ represent the time duration the $i$-th feature on the chief satellite spends in $\mathcal{S}(t)$.
\begin{assumption}\label{ass:startingFeature}
    The first illuminated feature is in the FOV of the camera upon initialization, i.e. $1 \in \mathcal{S}(t)$ when $ t_{1}^{a} = 0$ (cf. \cite{SCC.Parikh.Kamalapurkar.ea2018, SCC.Bell.Sun.ea2022}).
\end{assumption}
\begin{remark}
    Since the deputy spacecraft is always facing the chief satellite under Assumption~\ref{ass:trackableFeatures}, Assumption~\ref{ass:startingFeature} can be satisfied by ensuring the deputy spacecraft is facing the illuminated part of the chief satellite upon initialization.
\end{remark}
\begin{assumption}\label{ass:originsNotCoincident}
    The origins $\mathcal{O}_{B}$ and $\mathcal{O}_{K}$ are not coincident while $t > t_{i}^{a}$, specifically, there exists a constant $\underline{r} > 0$ such that $r_{b/k}(t) > \underline{r}$ for all $t \in (t_{i}^{a},  \; t_{i}^{u})$.
\end{assumption}
\begin{remark}
    To satisfy Assumption \eqref{ass:originsNotCoincident}, key frames are selected during implementation to ensure that the camera origin $\mathcal{O}_{B}$ and the key frame origin $\mathcal{O}_{K}$ are sufficiently far apart, thus preventing them from being coincident at any time t.
\end{remark}
\begin{assumption}\label{ass:knownLinearVelocity}
     The linear velocity of the camera relative to the chief satellite, $\vec{v}_{b/h}(t)$, coordinatized in $\mathcal{F}_{H}$, is measurable and bounded by a constant $v_{\max} > 0$ such that it satisfies $\|\vec{v}_{b/h}(t)\| \leq v_{\max}$.
\end{assumption}
If $Y_{i}(t)$ is full column rank, the expression in \eqref{eq:initialDistanceRelationShip} can be simplified to obtain
\begin{equation}\label{eq:distanceRegressor}
    \begin{bmatrix}
        r_{b/h_{i}}(t) \\ r_{b/k}(t)
    \end{bmatrix} = \psi_{i}(t) r_{k/h_{i}},
\end{equation}
where $\psi_{i}(t) = \left(Y_{i}^{\top}(t)Y_{i}(t)\right)^{-1}Y_{i}^{\top}(t)\vec{u}_{k/h_{i}} \in \mathbb{R}^{2 \times 1}$.
Under Assumptions~\ref{ass:intrinsicMatrix}--\ref{ass:knownLinearVelocity}, the time derivatives of the unknown distances are given, for all $i \in \mathcal{S}(t)$ and $t \in \mathbb{R}_{\geq 0}$, by 
\begin{equation}\label{eq:unknownDistDynamics1}
    \dot{r}_{b/h_{i}}(t) = \vec{u}_{b/h_{i}}^{\top}(t)\vec{v}_{b/h}(t),
\end{equation}
\begin{equation}\label{eq:unknownDistDynamics2}
    \dot{r}_{b/k}(t) = \vec{u}_{b/k}^{\top}(t)\vec{v}_{b/h}(t),
\end{equation}
and
\begin{equation}\label{eq:unknownDistDynamics3}
\dot{r}_{k/h_{i}} = 0.
\end{equation}

To estimate the unknown distances $r_{b/h_{i}}$, $r_{b/k}$, and $r_{k/h_{i}}$, this paper employs memory-based adaptive control.

\subsection{Learning the Feature Structure of Chief Satellite Using Memory-Based Regression}
To facilitate the development of the MRE-based distance observer, let $\Xi_{i} \coloneqq \begin{bmatrix}
\vec{u}_{b/h_{i}}^{\top}(t) & \vec{u}_{b/k}^{\top}(t)
\end{bmatrix}^{\top}\vec{v}_{b/h}(t)$. The first step in the design of an MRE-based observer is to express the dynamics in \eqref{eq:unknownDistDynamics1}--\eqref{eq:unknownDistDynamics3} as $\mathcal{Y}_{i}(t)r_{k/h_{i}} = \mathcal{U}_{i}(t), \, t > t_{i}^{a}$, where $\mathcal{Y}_{i}:\mathbb{R}_{\geq 0} \to \mathbb{R}^{2}$ and $\mathcal{U}_{i}:\mathbb{R}_{\geq 0} \to \mathbb{R}^{2}$ are signals that can be computed using the available measurements.

One technique for MRE-based observer design involves using
regressor filtering, where the dynamics in \eqref{eq:unknownDistDynamics1} and \eqref{eq:unknownDistDynamics3} are expressed in filtered form.
To retain memory of the regressor signal, this paper adopts a exponentially decaying filter $\phi_{i}(t) \coloneqq \lambda_{i}e^{-\lambda_{i}t}$, where the constant $\lambda_{i} \in \mathbb{R}_{> 0}$ is the forgetting factor.
Applying the time-varying filter $\phi_{i}(t)$ to both sides of \eqref{eq:unknownDistDynamics1} and \eqref{eq:unknownDistDynamics2}, we get the filtered form as
\begin{equation}\label{eq:distanceFilteredForm}
     \begin{bmatrix}
        r_{b/h_{i}}^{\mathtt{f}}(t) \\ r_{b/k}^{\mathtt{f}}(t)
    \end{bmatrix} = \Xi_{i}^{\mathtt{f}}(t), 
\end{equation}
where the signals $r_{b/h_{i}}^{\mathtt{f}}(t)
 \coloneqq \phi_{i}(t)*[\dot{r}_{b/h_{i}}(t)]$, $r_{b/k}^{\mathtt{f}}(t)
 \coloneqq \phi_{i}(t)*[\dot{r}_{b/k}(t)]$, and $\Xi_{i}^{\mathtt{f}}(t)
 = \phi_{i}(t)*[\Xi_{i}(t)]$ denote the filtered versions of their respective inputs, obtained by applying the filter $\phi_{i}(t)$. Let $r_{i}^{\mathtt{f}}(t) \coloneqq \begin{bmatrix}
    r_{b/h_{i}}^{\mathtt{f}}(t) & r_{b/k}^{\mathtt{f}}(t)
\end{bmatrix}^{\top} \in \mathbb{R}^{2}$ and $r_{i}(t) \coloneqq \begin{bmatrix}
    r_{b/h_{i}}(t) & r_{b/k}(t)
\end{bmatrix}^{\top} \in \mathbb{R}^{2}$. Since convolution with an exponential is equivalent to solving a first-order differential equation, we can write
{\medmuskip=2mu\thinmuskip=2mu\thickmuskip=2mu\begin{align}\label{eq:filteredForm2}
r_{i}^{\mathtt{f}}(t) &= \int_{t_{i}^{a}}^{t}\phi_{i}(t-\tau)\dot{r}_{i}(\tau)\, \mathrm{d}\tau 
= \lambda_{i} \int_{t_{i}^{a}}^t e^{-\lambda_{i} (t - \tau)} \dot{r}_{i}(\tau)\, \mathrm{d}\tau \nonumber \\
&= -r_{i}(t) + e^{-\lambda_{i} t} r_{i}(t_{i}^{a}) + \lambda_{i} \int_{t_{i}^{a}}^t e^{-\lambda_{i} (t - \tau)} r_{i}(\tau)\, \mathrm{d}\tau.
\end{align}}
By substituting \eqref{eq:distanceRegressor} and \eqref{eq:distanceFilteredForm} into \eqref{eq:filteredForm2}, we obtain the vectors $\mathcal{Y}_{i}(t) \in \mathbb{R}^{2}$ and $\mathcal{U}_{i}(t) \in \mathbb{R}^{2}$ as 
\begin{equation}
    \mathcal{Y}_{i}(t) \coloneqq \begin{cases}
    -\psi_{i}(t) + \psi_{i}^{e}(t, t_{i}^{a}) , & t \in (t_{i}^{a}, \; t_{i}^{u}], \\
    0_{2\times 1}, & t \in (t_{i}^{u}, \; t_{i+1}^{a}],
\end{cases}
\end{equation}
and 
\begin{equation}
    \mathcal{U}_{i}(t) \coloneqq \begin{cases}
    \Xi_{i}^{\mathtt{f}}(t), & t \in (t_{i}^{a}, \; t_{i}^{u}], \\
    0_{2\times 1}, & t \in (t_{i}^{u}, \; t_{i+1}^{a}],
\end{cases}
\end{equation}
respectively, where $\psi_{i}^{e}(t, t_{i}^{a}) = e^{-\lambda_{i} t} \psi_{i}(t_{i}^{a}) + \lambda_{i} \int_{t_{i}^{a}}^t e^{-\lambda_{i} (t - \tau)} \psi_{i}(\tau)\, \mathrm{d}\tau$.

Another MRE approach uses windowed integration, where
the dynamics in \eqref{eq:unknownDistDynamics1} and \eqref{eq:unknownDistDynamics2} are integrated over a finite time window of
length $\varsigma > 0$. Integrating \eqref{eq:unknownDistDynamics1} and \eqref{eq:unknownDistDynamics2} over $\varsigma$, we get
\begin{equation}\label{eq:secondDistance}
    \begin{bmatrix}
        r_{b/h_{i}}(t) \\ r_{b/k}(t) 
    \end{bmatrix} - \begin{bmatrix}
        r_{b/h_{i}}(t-\varsigma) \\ r_{b/k}(t-\varsigma) 
    \end{bmatrix} =  \int_{t-\varsigma}^{t} \Xi_{i}(\tau)\,  \mathrm{d}\tau, 
\end{equation}
with $t > t_{i}^{a} + \varsigma$. Substituting equation \eqref{eq:secondDistance} into \eqref{eq:distanceRegressor} at times $t$ and $t-\varsigma$ also yields the vectors $\mathcal{Y}_{i}(t) \in \mathbb{R}^{2}$ and $\mathcal{U}_{i}(t) \in \mathbb{R}^{2}$ as \begin{equation}\mathcal{Y}_{i}(t) = \begin{cases}
    \left(\psi_{i}(t)-\psi_{i}(t_{i}^{a})\right), & t \in \left(t_{i}^{a}, \; t_{i}^{a} + \varsigma\right], \\
    \left(\psi_{i}(t)-\psi_{i}(t-\varsigma)\right), & t \in \left(t_{i}^{a} + \varsigma, \; t_{i}^{u}\right],\\
    0_{2 \times 1}, & t \in \left(t_{i}^{u}, \; t_{i+1}^{a}\right],
\end{cases}
\end{equation}
and \begin{equation}\mathcal{U}_{i}(t) = \begin{cases}
    \int_{t_{i}^{a}}^{t} \Xi_{i}(\tau)\,  \mathrm{d}\tau,  & t \in \left(t_{i}^{a}, \; t_{i}^{a} + \varsigma\right], \\
    \int_{t-\varsigma}^{t} \Xi_{i}(\tau)\,  \mathrm{d}\tau,  & t \in \left(t_{i}^{a} + \varsigma, \; t_{i}^{u}\right], \\
    0_{2 \times 1}, & t \in \left(t_{i}^{u}, \; t_{i+1}^{a}\right],
\end{cases}
\end{equation} where $t \in \left[t_{i}^{a}, t_{i}^{u}\right)$ implies that the $i$-th feature on the chief satellite is in the set $\mathcal{S}(t)$ and $t \in \left[t_{i}^{u}, t_{i+1}^{a}\right)$ implies that the $i$-th feature on the chief satellite is in the set $\mathcal{S}^{c}(t)$. 

Multiplying both sides of the relation above by $\mathcal{Y}_{i}^{\top}(t)$ yields
 \begin{equation}\label{eq:augDynamics2}
    \mathcal{Y}_{i}^{\top}(t)\mathcal{Y}_{i}(t)r_{k/h_{i}} = \mathcal{Y}_{i}^{\top}(t)\mathcal{U}_{i}(t)
\end{equation}
In general, $\mathcal{Y}_{i}(t)$ will not have full column rank while $\sigma_{i}(t) = a$ (e.g. when the camera
is stationary) implying that in general $\mathcal{Y}_{i}^{\top}(t)\mathcal{Y}_{i}(t) \succeq 0$. However, {multiple measurements  can be summed together as
\begin{equation}\label{eq:CLdynamics}
    \sum_{s = 1}^{M} \frac{\mathcal{Y}_{i}^{\top}(t_{s})\mathcal{Y}_{i}(t_{s})}{1+\|\mathcal{Y}_{i}(t_{s})\|^{2}}
    r_{k/h_{i}} = \sum_{s = 1}^{M} \frac{\mathcal{Y}_{i}^{\top}(t_{s})\mathcal{U}_{i}(t_{s})}{1+\|\mathcal{Y}_{i}(t_{s})\|^{2}},
\end{equation}
where $M \in \mathbb{Z}_{\geq 1}$ and $t_{s} \in \left(t_{i}^{a}, \; t_{i}^{u}\right]$. Let $\Sigma_{\mathcal{Y}_{i}} \coloneqq \sum_{s = 1}^{M} \frac{\mathcal{Y}_{i}^{\top}(t_{s})\mathcal{Y}_{i}(t_{s})}{{1+\|\mathcal{Y}_{i}(t_{s})\|^{2}}}$ and $\Sigma_{\mathcal{U}_{i}} \coloneqq \sum_{s = 1}^{M} \frac{\mathcal{Y}_{i}^{\top}(t_{s})\mathcal{U}_{i}(t_{s})}{1+\|\mathcal{Y}_{i}(t_{s})\|^{2}}$. To facilitate the subsequent stability analysis in Section~\ref{section:observerStabilityAnalysis}, the following rank condition on the regressor matrix $\Sigma_{\mathcal{Y}_{i}}$ is necessary.
\begin{assumption}\label{ass:sufficientExcitation}
    The motion of the camera is assumed to be uniformly sufficiently rich such that there exists a set of features $\mathcal{V}(t) \subset \mathcal{N}$ satisfying $\mathcal{V}(t)\cap\mathcal{S}(t)\neq \emptyset$, a constant $\underline{\Sigma} > 0$, and a set of finite time instances $\mathcal{T} = \left\{t_{i}^{*}\right\}_{i \in \mathcal{V}(t)}$ such that for all $t \geq t_{i}^{*}$ with $t_{i}^{*} \in \left(t_{i}^{a}, \; t_{i}^{u}\right)$, and all initial conditions $r_{b/h_{i}}(0), r_{b/k}(0)$, $\vec{u}_{b/h_{i}}(0)$, and $\vec{u}_{b/k}(0)$,
    \begin{equation}\label{eq:rankCond}
\lambda_{\min}\left(\Sigma_{\mathcal{Y}_{i}}\right) > \underline{\Sigma}, \quad \forall i \in \mathcal{V}(t).
    \end{equation}
\end{assumption}
\begin{remark}
    Assumption \ref{ass:sufficientExcitation} is an observability condition required for the subsequent development that is similar to other image-based observers \cite{SCC.Bell.Chen.ea2017, SCC.Bell.Deptula.ea2020, SCC.Ogri.Qureshi.ea2024}. Assumption~\ref{ass:sufficientExcitation} is impossible to guarantee \emph{a piori}, but \eqref{eq:rankCond} can be verified online. 
\end{remark}

The time instances $t_{i}^{*}$ are unknown \emph{a piori}; however, they can be determined online by checking the minimum eigenvalue of the regressor matrix $\Sigma_{\mathcal{Y}_{i}}$ in Assumption~\ref{ass:sufficientExcitation}. After $t\geq t_{i}^{*}$, $\lambda_{\min}\{\Sigma_{\mathcal{Y}_{i}}
\} > \underline{\Sigma}$ implies that $r_{k/h_{i}} = \Sigma_{\mathcal{Y}_{i}}^{-1}\Sigma_{\mathcal{U}_{i}}$ for feature  $i \in \mathcal{V}(t)$. Let $\mathcal{P}(t) = \left\{i \in \mathcal{V}(t) \cap \mathcal{S}(t)\mid \lambda_{\min}\left(Y_{i}^{\top}(t)Y_{i}(t)\right) > \lambda_{a}\right\}$ and $\mathcal{P}^{c}(t) = \mathcal{N} \setminus \mathcal{P}(t)$. Let $\left\{\sigma_{i}(t)\right\}_{i \in \mathcal{S}}$ be the set of
switching signals, where $\sigma_{i}(t) \in \left\{u, a\right\}$ indicates whether $i \in \mathcal{P}(t)$ or $i \in \mathcal{P}^{c}(t)$. Specifically, $\sigma_{i}(t) = a$ implies that illuminated feature $i$ is in the FOV, the camera has sufficient motion such that the rank condition in Assumption~\ref{ass:sufficientExcitation} is satisfied $\lambda_{\min}\left(Y_{i}^{\top}(t)Y_{i}(t)\right) > \lambda_{a}$; while $\sigma_{i}(t) = u$ implies that the feature is outside the FOV of the camera or the rank condition in Assumption~\ref{ass:sufficientExcitation} is not satisfied or $\lambda_{\min}\left(Y_{i}^{\top}(t)Y_{i}(t)\right) \leq 0$.

Let $\theta_{i}(t) \coloneqq \begin{bmatrix}
    r_{b/h_{i}}(t)  & r_{b/k}(t)  & r_{k/h_{i}}
\end{bmatrix}^{\top} \in \mathbb{R}^{3}$ be a concatenated vector of the unknown distances. Substituting $r_{k/h_{i}} = \Sigma_{\mathcal{Y}_{i}}^{-1}\Sigma_{\mathcal{U}_{i}}$ into \eqref{eq:initialDistanceRelationShip} yields the relationship
\begin{equation}
    \mathscr{Y}_{i}(t)\theta_{i}(t) = \mathscr{U}_{i}(t),
\end{equation}
for all $t$ such that $\sigma_{i}(t) = a$, where $\mathscr{Y}_{i}(t) \coloneqq \begin{bmatrix}
    Y_{i}(t) & 0_{3 \times 1} \\ 0_{1 \times 2} & 1
\end{bmatrix} \in \mathbb{R}^{4 \times 3}$ and $\mathscr{U}_{i}(t) \coloneqq  \begin{bmatrix}
    \vec{u}_{k/h_{i}}\Sigma_{\mathcal{Y}_{i}}^{-1}\mathcal{U}_{i}(t) \\ \Sigma_{\mathcal{Y}_{i}}^{-1}\mathcal{U}_{i}(t)
\end{bmatrix} \in \mathbb{R}^{4}$.
Let $\hat{\theta}_{i}(t) = \begin{bmatrix}
    \hat{r}_{b/h_{i}}(t)  & \hat{r}_{b/k}(t)  & \hat{r}_{k/h_{i}}(t)
\end{bmatrix}^{\top} \in \mathbb{R}^{3}$ be the concatenated vector of the distance estimates. Based on the subsequent stability analysis in Section~\ref{section:observerStabilityAnalysis}, an MRE-based adaptive update law can be used to estimate the unknown distances estimates $\hat{\theta}_{i}(t) \in \mathbb{R}^{3}$ as
\begin{equation}\label{eq:updateLaw}
    \dot{\hat{\theta}}_{i}(t) = \begin{cases}
         0, & \sigma_{i}(t) = u,\\
         \proj\left(\Pi_{i}(\hat{\theta}_{i}(t), t)\right), & \sigma_{i}(t) = a,
    \end{cases}
\end{equation}
where $\Pi_{i}(\hat{\theta}_{i}(t), t) \coloneqq \mu_{i}(t) + k_{\theta}\mathscr{Y}_{i}^{\top}(t)\left(\mathscr{U}_{i}(t)-\mathscr{Y}_{i}(t)\hat{\theta}_{i}(t)\right)$,  $\mu_{i}(t) \coloneqq \begin{bmatrix}
        \vec{u}_{b/h_{i}}^{\top}(t)\vec{v}_{b/h}(t) & \vec{u}_{b/k}^{\top}(t)\vec{v}_{b/h}(t) & 0
        \end{bmatrix}^{\top}$, and $k_{\theta} \in \mathbb{R}_{>0}$ is a constant observer gain. The notation $\proj(\cdot)$ represents the smooth projection operator defined in \cite{SCC.Cai.Queiroz.ea2006, SCC.Krstic.Kanellakopoulos.ea1995} employed to ensure that the bounds $r_{\min} \leq \hat{r}_{b/h_{i}}(t) \leq r_{\max}$, $\underline{r} \leq \hat{r}_{b/k}(t)$, and $r_{\min} \leq \hat{r}_{k/h_{i}}(t) \leq r_{\max}$ hold for all $t \in \mathbb{R}_{> t_{i}^{a}}$. The MRE-based update laws are designed based on whether sufficient data have been collected based on the motion of the camera and whether the $i$-th feature is in the FOV. Let $\tilde{\theta}_{i}(t) \in \mathbb{R}^{3}$ represent the concatenated distance estimation error defined as 
\begin{equation}\label{eq:distanceError}
    \tilde{\theta}_{i}(t) \coloneqq \theta_{i}(t) - \hat{\theta}_{i}(t)
\end{equation}
Taking the time derivative of \eqref{eq:distanceError}, substituting the derivatives in \eqref{eq:unknownDistDynamics1}--\eqref{eq:unknownDistDynamics3}, and the adaptive update law in \eqref{eq:updateLaw} yields 
\begin{equation}\label{eq:observerErrorDynamics}
    \dot{\tilde{\theta}}_{i}(t) = \begin{cases}
         0, & \sigma_{i}(t) = u,\\
        -k_{\theta}\mathscr{Y}_{i}^{\top}(t)\mathscr{Y}_{i}(t)\tilde{\theta}_{i}, & \sigma_{i}(t) = a.
    \end{cases}
\end{equation}

 The following section analyses the convergence properties of the observer error system.

\section{Observer Design Stability Analysis}\label{section:observerStabilityAnalysis}

 To facilitate the analysis, let $\tau_{i}^{a} \in \mathbb{R}_{\geq 0}$ represent the time instance when $\sigma_{i}(t) = a$
and let $\tau_{i}^{u} \in \mathbb{R}_{\geq 0}$ represent the time instance when $\sigma_{i}(t) = u$. Let $\Delta \tau_{i}^{a} \coloneqq \tau_{i}^{u} - \tau_{i}^{a}$ and $\Delta \tau_{i}^{u} \coloneqq \tau_{i+1}^{a} - \tau_{i}^{u}$ be the duration where $\sigma_{i}(t) = a$ and $\sigma_{i}(t) = u$, respectively. Let $W_{i}: \mathbb{R}^{3} \to \mathbb{R}$ be a candidate Lyapunov function defined as 
\begin{equation}\label{eq:observerLyap}
    W_{i}(\tilde{\theta}_{i}) \coloneqq \frac{1}{2}\tilde{\theta}_{i}^{\top}\tilde{\theta}_{i}.
\end{equation}
 The following theorem establishes global uniform ultimate boundedness of the observer error system.
\begin{theorem}\label{thm:expObserver}
     If Assumptions~\ref{ass:unObstructedRays}-\ref{ass:sufficientExcitation} hold and the trajectories of \eqref{eq:unknownDistDynamics1}--\eqref{eq:unknownDistDynamics3} does not blow up to infinity in finite time, the MRE-based update law defined in \eqref{eq:updateLaw} ensures that the trajectories of the concatenated observer error $\tilde{\theta}_{i}(\cdot)$ are globally exponentially stable such that for all $i$ and for every pair of switching times $(\tau_{i}^{u}, \tau_{i}^{a})$, the trajectories of the observer error satisfies the bound
\begin{equation}\label{eq:finalBound}
         \left\|\tilde{\theta}_{i}(\tau_{i+p}^{u})\right\| \leq \left\|\tilde{\theta}_{i}(\tau_{i}^{a})\right\|\prod_{j=0}^{p}e^{-\beta\Delta \tau_{i+j}^{a}}.
     \end{equation}
\end{theorem}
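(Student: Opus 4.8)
The plan is to exploit the two-mode structure of the error dynamics \eqref{eq:observerErrorDynamics}: the candidate Lyapunov function $W_i$ in \eqref{eq:observerLyap} contracts exponentially whenever $\sigma_i = a$, while $\tilde{\theta}_i$ is merely held constant whenever $\sigma_i = u$. Chaining these two facts across the switching sequence by induction then yields \eqref{eq:finalBound}.

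First I would fix an active interval $t \in (\tau_i^a, \tau_i^u)$, where $\sigma_i(t) = a$. By the definition of this mode, the excitation condition of Assumption~\ref{ass:sufficientExcitation} holds, so $r_{k/h_i} = \Sigma_{\mathcal{Y}_i}^{-1}\Sigma_{\mathcal{U}_i}$, and therefore the \emph{exact}, residual-free identity $\mathscr{U}_i(t) = \mathscr{Y}_i(t)\theta_i(t)$ holds throughout the interval. Using $\dot{\theta}_i(t) = \mu_i(t)$ from \eqref{eq:unknownDistDynamics1}--\eqref{eq:unknownDistDynamics3}, so that $\dot{\tilde{\theta}}_i = \mu_i - \proj(\Pi_i)$, I would differentiate $W_i$ and invoke the standard smooth-projection inequality $-\tilde{\theta}_i^\top\proj(\Pi_i) \le -\tilde{\theta}_i^\top\Pi_i$ of \cite{SCC.Cai.Queiroz.ea2006, SCC.Krstic.Kanellakopoulos.ea1995}, which applies because the true $\theta_i$ lies in the convex feasibility box enforced by $\proj$ ($r_{\min}\le r_{b/h_i}, r_{k/h_i}\le r_{\max}$ and $r_{b/k}\ge\underline{r}$, from the controller of Section~\ref{section:controlDesign} and Assumption~\ref{ass:originsNotCoincident}). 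Substituting $\mathscr{U}_i = \mathscr{Y}_i\theta_i$ into $\Pi_i$ then gives
\[
\dot{W}_i \;\le\; \tilde{\theta}_i^\top\big(\mu_i - \Pi_i\big) \;=\; -k_\theta\,\tilde{\theta}_i^\top\mathscr{Y}_i^\top(t)\mathscr{Y}_i(t)\tilde{\theta}_i,
\]
consistent with \eqref{eq:observerErrorDynamics}. Since $\mathscr{Y}_i^\top(t)\mathscr{Y}_i(t) = \blkdiag\!\big(Y_i^\top(t)Y_i(t),\,1\big)$, its smallest eigenvalue equals $\min\{\lambda_{\min}(Y_i^\top Y_i),\,1\}$, and the mode $\sigma_i = a$ forces $\lambda_{\min}(Y_i^\top Y_i) > \lambda_a$; hence $\lambda_{\min}(\mathscr{Y}_i^\top\mathscr{Y}_i) \ge \underline{\lambda} := \min\{\lambda_a,1\} > 0$, giving $\dot{W}_i \le -2\beta W_i$ with $\beta := k_\theta\underline{\lambda}$. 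The comparison lemma then yields $\|\tilde{\theta}_i(t)\| \le \|\tilde{\theta}_i(\tau_i^a)\|\,e^{-\beta(t-\tau_i^a)}$ on $[\tau_i^a,\tau_i^u]$, in particular $\|\tilde{\theta}_i(\tau_i^u)\| \le \|\tilde{\theta}_i(\tau_i^a)\|\,e^{-\beta\Delta\tau_i^a}$.

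Next I would handle an inactive interval $t \in [\tau_i^u,\tau_{i+1}^a]$, where $\dot{\tilde{\theta}}_i \equiv 0$ by \eqref{eq:observerErrorDynamics}, so $\|\tilde{\theta}_i(\tau_{i+1}^a)\| = \|\tilde{\theta}_i(\tau_i^u)\|$: the error is frozen, never amplified, while the feature is out of view or the rank test fails. An induction on $p$ then closes the argument — the base case $p = 0$ is the active-interval estimate above, and in the inductive step I would compose the freeze identity over $[\tau_{i+p-1}^u,\tau_{i+p}^a]$ with the contraction estimate over $[\tau_{i+p}^a,\tau_{i+p}^u]$ to obtain $\|\tilde{\theta}_i(\tau_{i+p}^u)\| \le \|\tilde{\theta}_i(\tau_{i+p-1}^u)\|\,e^{-\beta\Delta\tau_{i+p}^a} \le \|\tilde{\theta}_i(\tau_i^a)\|\prod_{j=0}^{p}e^{-\beta\Delta\tau_{i+j}^a}$, which is \eqref{eq:finalBound}. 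Continuity of $\tilde{\theta}_i$ at the switching instants makes the one-sided evaluations unambiguous, and the no-finite-escape hypothesis (which is in fact implied by $\|\vec{v}_{b/h}\|\le v_{\max}$ of Assumption~\ref{ass:knownLinearVelocity}) ensures $\theta_i$, hence $\tilde{\theta}_i$, remains defined on every interval.

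I expect the main obstacle to be the two places where structure, rather than computation, carries the argument: (i) the careful invocation of the projection inequality, which requires the true distances to lie inside the operator's feasibility set — this is where the safety bounds from the controller and Assumption~\ref{ass:originsNotCoincident} enter; and (ii) the claim that the regression relation $\mathscr{U}_i = \mathscr{Y}_i\theta_i$ holds \emph{exactly}, with zero residual, on each active interval. This second point is precisely what the MRE filtering/windowing construction combined with the excitation of Assumption~\ref{ass:sufficientExcitation} delivers, and it is what upgrades the conclusion from mere ultimate boundedness to the per-interval exponential contraction recorded in \eqref{eq:finalBound}.
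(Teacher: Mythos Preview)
Your proposal is correct and follows the same approach as the paper: bound $\dot{W}_i$ separately on each mode, obtain per-interval exponential contraction when $\sigma_i = a$ and no growth when $\sigma_i = u$ via \eqref{eq:observerErrorDynamics}, then chain the estimates recursively across the switching sequence. You are in fact slightly more careful than the paper in computing $\lambda_{\min}(\mathscr{Y}_i^\top\mathscr{Y}_i) = \min\{\lambda_{\min}(Y_i^\top Y_i),\,1\}$, which gives $\beta = k_\theta\min\{\lambda_a,1\}$ rather than the paper's $\beta = k_\theta\lambda_a$ (the latter tacitly assumes $\lambda_a \le 1$).
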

\begin{proof}
    Taking the Lie derivative of \eqref{eq:observerLyap} along the flow of \eqref{eq:observerErrorDynamics}, and substituting the observer error dynamics for the case where feature $\sigma_{i}(t) = u$ implies that $\dot{W}_{i}(\tilde{\theta}_{i}(t)) \leq 0$. Using the Comparison Lemma \cite[Lemma~3.4]{SCC.Khalil2002}, a bound on the observer error is obtained as
    \begin{equation}\label{eq:bound1}
        \left\|\tilde{\theta}_{i}(t)\right\| \leq \left\|\tilde{\theta}_{i}(\tau_{i}^{u})\right\|,
    \end{equation}
    while $\sigma_{i}(t) = u$.
    When $\sigma_{i}(t) = a$, $\lambda_{\min}\{\mathscr{Y}_{i}^{\top}(t)\mathscr{Y}_{i}(t)\} > \lambda_{a}$ implies the bound $\dot{W}_{i}(\tilde{\theta}_{i}(t)) \leq 
    -2\beta W_{i}(\tilde{\theta}_{i}(t))$, where $\beta = k_{\theta}\lambda_{a}$. Using the property of projection operators in \cite[Lemma E.1. IV]{SCC.Krstic.Kanellakopoulos.ea1995} and invoking \cite[Theorem~4.10]{SCC.Khalil2002}, it can be concluded that the observer error system is globally exponentially stable and by the Comparison Lemma \cite[Lemma~3.4]{SCC.Khalil2002}, the observer error is bounded as
    \begin{equation}\label{eq:bound2}
        \left\|\tilde{\theta}_{i}(t)\right\| \leq \left\|\tilde{\theta}_{i}(\tau_{i}^{a})\right\|e^{-\beta(t-\tau_{i}^{a})}.
    \end{equation}
    Substituting $t = \tau_{i}^{u}$ into the inequality in \eqref{eq:bound2} implies that $\|\tilde{\theta}_{i}(\tau_{i}^{u})\| \leq \|\tilde{\theta}_{i}(\tau_{i}^{a})\|e^{-\beta(\Delta \tau_{i}^{a})}$. Substituting
    the first inequality in \eqref{eq:bound1} into the previous inequality and extending this analysis recursively across $p$ intervals leads to the inequality in \eqref{eq:finalBound}.
\end{proof}
The following section establishes a minimum dwell-time condition, which ensures that features remain in the FOV of the camera long enough for distance estimation before leaving the FOV.

\section{Ensuring Stability Through Dwell-Time Conditions}

As established in Section~\ref{section:sequentialTracking}, using sequential feature tracking we can estimate the position of the deputy spacecraft relative to the goal $\hat{\vec{p}}_{g/b}(t)$, coordinatized in $\mathcal{F}_{H}$, using the distance estimates $\hat{\theta}_{i}(t)$ from the observer in \eqref{eq:updateLaw}. Each feature $i \in \mathcal{V}(t)$ provides a distance estimate during the finite interval $[\tau_{i}^{a}, \tau_{i}^{u})$ while it remains in the FOV of the camera. During this time, the estimation error $\tilde{\theta}_{i}(t)$ decays under the adaptive observer update law in \eqref{eq:updateLaw}. After the feature leaves the FOV, the corresponding estimate remains fixed, and a new feature is used to generate a new estimate. To ensure convergence of the error $\tilde{\vec{p}}_{g/b}(t)$ to a desired $\epsilon$-error bound, it is necessary that each new distance estimate $\hat{\theta}_{i}(t)$ used in the chaining process improves upon the previous one. Specifically, the estimation error at the end of each tracking interval of each feature $i \in \mathcal{V}(t)$ must be strictly less than that of the preceding feature. This paper formalizes this requirement using a minimum feedback availability dwell-time condition.
\begin{corollary}\label{cor:DwellTime}
Under the hypothesis of Theorem~\ref{thm:expObserver}, for each feature $i \in \mathcal{V}(t)$, the errors of the switched system defined by the switching signal $\sigma_{i}(t)$ and the observer update law in \eqref{eq:updateLaw} ensure that the estimation error $\tilde{\theta}_{i}(t)$ at time $t = \tau_{i}^{u}$ is bounded as
\begin{equation}\label{eq:dwellTimeCondition1}
\left\| \tilde{\theta}_{i}(\tau_{i}^{u}) \right\| \leq \overline{\theta}_{i-1}^{u} - \delta,
\end{equation}\label{eq:dwellTimeCond}
provided the switching signal $\sigma_{i}(t)$ satisfies the minimum feedback availability dwell-time condition
\begin{equation}
\Delta \tau_{i}^{a} \geq -\frac{1}{\beta} \ln\left( \frac{\overline{\theta}_{i-1}^{u} - \delta}{\overline{\theta}_{i}^{a}} \right),
\end{equation}
where $\delta \geq 0$ is the desired margin of improvement from the previous distance estimate, $\overline{\theta}_{i-1}^{u} > 0$ is the bound on the observer error $\tilde{\theta}_{i-1}(\tau_{i-1}^{u})$ of the previous feature in the sequence such that $\|\tilde{\theta}_{i-1}(\tau_{i-1}^{u})\| \leq \overline{\theta}_{i-1}^{u}$, and $\overline{\theta}_{i}^{a} > 0$ is a constant that satisfies $\|\tilde{\theta}_{i}(\tau_{i}^{a})\| \leq \overline{\theta}_{i}^{a}$.
\end{corollary}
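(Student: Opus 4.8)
The plan is to derive Corollary~\ref{cor:DwellTime} as an immediate consequence of the exponential decay estimate \eqref{eq:bound2} established in Theorem~\ref{thm:expObserver}. First I would specialize \eqref{eq:bound2} to the active interval of feature $i$: while $\sigma_{i}(t) = a$ on $[\tau_{i}^{a},\tau_{i}^{u})$, the observer error obeys $\|\tilde{\theta}_{i}(t)\| \leq \|\tilde{\theta}_{i}(\tau_{i}^{a})\|\,e^{-\beta(t-\tau_{i}^{a})}$ with $\beta = k_{\theta}\lambda_{a} > 0$. Evaluating this at the FOV-exit time $t = \tau_{i}^{u}$ and using $\Delta\tau_{i}^{a} = \tau_{i}^{u} - \tau_{i}^{a}$ gives $\|\tilde{\theta}_{i}(\tau_{i}^{u})\| \leq \|\tilde{\theta}_{i}(\tau_{i}^{a})\|\,e^{-\beta\Delta\tau_{i}^{a}}$.

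Next I would replace the unmeasurable initial error $\|\tilde{\theta}_{i}(\tau_{i}^{a})\|$ by the known constant $\overline{\theta}_{i}^{a}$. Such a constant exists because the true distances satisfy $r_{\min} \leq r_{b/h_{i}}(t), r_{k/h_{i}} \leq r_{\max}$ and $r_{b/k}(t) > \underline{r}$, while the smooth projection operator in \eqref{eq:updateLaw} confines the estimate $\hat{\theta}_{i}(t)$ to the same compact set for all $t > t_{i}^{a}$; hence $\tilde{\theta}_{i} = \theta_{i} - \hat{\theta}_{i}$ is uniformly bounded. (If feature $i$ had an earlier inactive segment, the non-growth bound \eqref{eq:bound1} shows the error does not increase there, so $\overline{\theta}_{i}^{a}$ may be taken as any valid bound at the last switch into $\sigma_{i} = a$.) Combining with the previous step yields $\|\tilde{\theta}_{i}(\tau_{i}^{u})\| \leq \overline{\theta}_{i}^{a}\,e^{-\beta\Delta\tau_{i}^{a}}$.

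Finally I would show that the stated dwell-time inequality is precisely what forces this bound below $\overline{\theta}_{i-1}^{u} - \delta$. Assuming $\overline{\theta}_{i-1}^{u} > \delta \geq 0$, so that the argument of the logarithm is positive, the condition $\Delta\tau_{i}^{a} \geq -\tfrac{1}{\beta}\ln\!\big((\overline{\theta}_{i-1}^{u}-\delta)/\overline{\theta}_{i}^{a}\big)$ is equivalent, by strict monotonicity of $x \mapsto e^{-\beta x}$, to $\overline{\theta}_{i}^{a}\,e^{-\beta\Delta\tau_{i}^{a}} \leq \overline{\theta}_{i-1}^{u} - \delta$; chaining this with the bound from the previous step gives \eqref{eq:dwellTimeCondition1}. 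I would also flag the degenerate cases: if $\overline{\theta}_{i}^{a} \leq \overline{\theta}_{i-1}^{u} - \delta$ the right-hand side of the dwell-time condition is non-positive, so any positive dwell time already suffices, whereas if $\overline{\theta}_{i-1}^{u} \leq \delta$ the requested margin is infeasible and feature $i$ is simply not used for localization.

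I do not anticipate a genuine obstacle here, since this is a corollary of Theorem~\ref{thm:expObserver}: the only care needed is the sign bookkeeping around the logarithm (positivity of its argument and the fact that $\beta > 0$) and the justification that the uniform initial-error bound $\overline{\theta}_{i}^{a}$ is well defined, which follows from the projection operator in the update law together with the boundedness of the true distances.
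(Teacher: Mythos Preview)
Your proposal is correct and follows essentially the same route as the paper: specialize the exponential decay bound \eqref{eq:bound2} to the active interval to get $\|\tilde{\theta}_{i}(\tau_{i}^{u})\| \leq \|\tilde{\theta}_{i}(\tau_{i}^{a})\|e^{-\beta\Delta\tau_{i}^{a}} \leq \overline{\theta}_{i}^{a}e^{-\beta\Delta\tau_{i}^{a}}$, then solve the inequality $\overline{\theta}_{i}^{a}e^{-\beta\Delta\tau_{i}^{a}} \leq \overline{\theta}_{i-1}^{u} - \delta$ for $\Delta\tau_{i}^{a}$. Your treatment is in fact more careful than the paper's, which simply asserts the existence of $\overline{\theta}_{i}^{a}$ and does not discuss the degenerate cases or the positivity of the logarithm's argument.
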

\begin{proof}
   For each $i \in \mathcal{V}(t)$, during the interval $t \in [\tau^{a}_{i}, \tau^{u}_{i})$ where feature $i$ is tracked, the observer error satisfies $\|\tilde{\theta}_{i}(\tau_{i}^{u})\| \leq \|\tilde{\theta}_{i}(\tau_{i}^{a})\|e^{-\beta(\Delta \tau_{i}^{a})}$ under the adaptive update law in \eqref{eq:updateLaw}. Given that the initial error is bounded as $\|\tilde{\theta}_{i}(\tau_{i}^{a})\| \leq \overline{\theta}_{i}^{a}$ for some constant $\overline{\theta}_{i}^{a} > 0$ and since we want to ensure that the final error $\|\tilde{\theta}_{i}(\tau_{i}^{u})\|$ improves upon the previous estimate $\|\tilde{\theta}_{i-1}(\tau_{i-1}^{u})\|$ by at least $\delta > 0$, it suffices to require that
\begin{equation}
\left\| \tilde{\theta}_{i}(\tau_{i}^{a}) \right\| e^{-\beta \Delta \tau_{i}^{a}} \leq \overline{\theta}_{i-1}^{u} - \delta.
\end{equation}
 Therefore, solving for $\Delta t_{i}^{a}$ yields the minimum feedback availability dwell-time condition in \eqref{eq:dwellTimeCondition1} for each $i \in \mathcal{V}(t)$.
\end{proof}
\begin{remark}
The dwell-time condition in \eqref{thm:expObserver} guarantees that each newly recorded distance estimate reduces the position estimation error $\tilde{\vec{p}}_{g/b}(t)$ when compared to the last estimate from a previously observed feature. Given that the $i$-th distance estimate remains fixed once the corresponding $i$-th feature exits the FOV, the reduction in error $\tilde{\vec{p}}_{g/b}(t)$ is achieved during the active tracking phase of a new feature. We transition to this new feature once the dwell time condition in \eqref{fig:dwellTime} is met, as illustrated in Fig.~\ref{fig:dwellTime} and used the newly observed feature to compute the next estimate of $\hat{\vec{p}}_{g/b}(t)$. Over time, this sequential refinement drives the position estimation error $\tilde{\vec{p}}_{g/b}(t)$ toward the target error $\epsilon$.
\end{remark}

The next section uses the estimates of $\hat{\vec{p}}_{g/b}(t)$ to develop an approximate Lagrangian-based policy that enables the deputy spacecraft to inspect all the points on the chief satellite.

\section{Safe Control Design under Illumination and FOV Constraints}\label{section:controlDesign}
\subsection{Observer Robust Control Barrier Functions}
To formulate the constrained optimal control problem, we define the state vector of the goal relative to the deputy spacecraft in the Hill frame $\mathcal{F}_{H}$ as $\mathbf{x}(t) \coloneqq \begin{bmatrix} \vec{p}_{g/b}^{\top}(t) & \vec{v}_{g/b}^{\top}(t) \end{bmatrix}^{\top} \in \mathbb{R}^{6}$.
The state evolves according to the linearized CW dynamics, given by
\begin{equation}\label{eq:linearDynamics}
    \dot{\mathbf{x}} = A\mathbf{x}(t) + B \mathbf{u}(t), \quad \mathbf{x}(0) = \mathbf{x}_{0},
\end{equation}
where 
$A = \begin{bmatrix}
 0 & 0 & 0 & 1 & 0 & 0 \\
 0 & 0 & 0 & 0 & 1 & 0 \\
 0 & 0 & 0 & 0 & 0 & 1 \\
 3n^{2} & 0 & 0 & 0 & 2n & 0 \\
 0 & 0 & 0 & -2n & 0 & 0 \\
 0 & 0 & -n^{2} & 0 & 0 & 0 
\end{bmatrix}$, $B =  \frac{1}{m}\begin{bmatrix}
 0_{3 \times 3} \\ 
 I_{3 \times 3}
\end{bmatrix}$, and $\mathbf{u}(t) = \begin{bmatrix} \left(F(t)\right)_{x} & \left(F(t)\right)_{y} & \left(F(t)\right)_{z} \end{bmatrix}^{\top} \in \mathbb{R}^{3}$. Under the assumption that the goal is stationary in the chief frame, the relative velocity satisfies  
$\vec{v}_{g/b}(t) = -\vec{v}_{b/h}(t)$. Because $\vec{p}_{g/b}$ is unknown and must be estimated via the MRE described in Section~\ref{section:MREObserver}, the controller will have to depend on state estimates, denoted $\hat{\mathbf{x}}_{i} \in \mathbb{R}^{3}$, instead of the true state of $\mathbf{x}(t)$, defined as $\hat{\mathbf{x}}(t) \coloneqq \begin{bmatrix}\hat{\vec{p}}_{g/b, h_{i}}^{\top}(t) & \vec{v}_{g/b}^{\top}(t)\end{bmatrix}^{\top} \in \mathbb{R}^{6}$, where $\hat{\vec{p}}_{g/b, h_{i}} \coloneqq \vec{p}_{g/h_{i}} - \vec{u}_{b/h_{i}}(t)\hat{r}_{b/h_{i}}(t)$. As highlighted in \cite{SCC.Xu.Tabuada.ea2015, SCC.Jankovic.ea2018,SCC.Kolathaya.Ames.ea2019,SCC.Agrawal.Panagou.ea2022}, guaranteeing safety using an approximate controller requires robustification of barrier functions. Motivated by the results in \cite{SCC.Agrawal.Panagou.ea2022}, this paper employs an observer-based robust control barrier function (ORCBF) to provide robustness to estimation errors. To this end, let the ORCBF $\Phi \in \mathbb{R}^{6} \times [0, \; t_{f}] \to \mathbb{R}$ be defined as 
\begin{equation}\label{eq:observerRobustBarrier}
     \Phi(\mathbf{x}, t) \coloneqq \left(b_{r}(\mathbf{x}, t) - b_{r}(0, t)\right)^{2},
 \end{equation}
 where $t_{f} \in \mathbb{R}_{\geq 0}$ is the final time for a fixed goal location, $b_{r}(\mathbf{x}, t) \coloneqq  \frac{\gamma_{\Phi}}{h_{r}(\mathbf{x}, t)}$  is a robustified barrier function and  $\gamma_{\Phi} > 0$ is the barrier penalty gain. The robustified constraint function is given by $h_{r}(\mathbf{x}, t) = h(\mathbf{x}) - L_{h} M(t)$, where the constraint function $h: \mathbb{R}^{6} \to \mathbb{R}$ is locally Lipschitz continuous on a compact set $\Omega \subset \mathbb{R}^{6}$ containing the origin, such that there exists a constant $L_{h} > 0$ that satisfies
\begin{equation}\label{eq:lipschitzH}
    \vert  h(\mathbf{x})- h(\hat{\mathbf{x}}_{i})\vert \leq  L_{h} \left\|\mathbf{x}-\hat{\mathbf{x}}_{i}\right\|,
\end{equation}
for all $\mathbf{x}, \hat{\mathbf{x}}_{i} \in \Omega$. The term $M: [0, \; t_{f}] \to \mathbb{R}$ is a non-increasing robustifying function, designed to compensate for state estimation errors, and is defined as 
\begin{equation}\label{eq:compensationTerm}
     M(t) \coloneqq \epsilon_{r}e^{-\beta t},
\end{equation}
where $\epsilon_{r} > 0$ is a known bound that satisfies $\left\|\mathbf{x}(0)-\hat{\mathbf{x}}_{i}(0)\right\| \leq \epsilon_{r}$ and can be computed as
\begin{equation}\label{eq:initialBound}
    \epsilon_{r} = \scalemath{0.92}{\max_{r \in [r_{\min}, r_{\max}]} \left\{\left\| r \vec{u}_{k/h_{i}}(0) - \vec{p}_{g/h}(0) - \hat{\vec{p}}_{g/b}(0) \right\|\right\}},
\end{equation}
using the relationships in \eqref{eq:goalPosition} and \eqref{eq:deputyKeyFrame}.
 The constraint function $h$ encodes keep-out-zone (KOZ) and keep-in-zone (KIZ) constraints. It satisfies $\nabla_{\mathbf{x}}h(\mathbf{x}) \neq 0$ for all $\mathbf{x} \in \{\mathbf{x} \in \Omega \mid h(\mathbf{x}) \geq 0\}$ and for all $t \in \left[0, \; t_{f}\right]$, and is defined as
\begin{equation}
    h(\mathbf{x}) \coloneqq \frac{h_{\textrm{KOZ}}(\mathbf{x}) \cdot h_{\mathrm{KIZ}}(\mathbf{x})}{h_{\textrm{KOZ}}(\mathbf{x})+h_{\mathrm{KIZ}}(\mathbf{x})}.
\end{equation}
The function $h_{\textrm{KOZ}}: \mathbb{R}^{6} \to \mathbb{R}_{\geq 0}$ enforces KOZ constraint, ensuring the deputy spacecraft does not collide with the chief satellite, and is given by \cite{SCC.Dunlap.Wijk.ea2023, SCC.Dunlap.Bennett.ea2024}
\begingroup\medmuskip=0mu\begin{equation}
    h_{\text{KOZ}}(\mathbf{x}) =\sqrt{2a_{\max}\left(\left\|\vec{p}_{g/b}+\vec{p}_{g/h}\right\| - r_{\min}\right)} + \vec{v}_{\mathrm{proj}}(\mathbf{x}),
\end{equation}\endgroup
where $a_{\max}$ is the maximum acceleration from natural motion and control limits, and $\vec{v}_{\mathrm{proj}}(\mathbf{x}) = \frac{\langle\vec{v}_{b/h}, \;\vec{p}_{b/h}\rangle}{\left\|\vec{p}_{b/h}\right\| }$. Conversely, the function $ h_{\textrm{KIZ}}: \mathbb{R}^{6} \to \mathbb{R}_{\geq 0}$ enforces the KIZ constraint, ensuring the deputy spacecraft does not stray too far from the chief satellite, and is given by \cite{SCC.Dunlap.Wijk.ea2023, SCC.Dunlap.Bennett.ea2024} 
 \begingroup\medmuskip=0mu\begin{equation}
     h_{\textrm{KIZ}}(\mathbf{x}) =\sqrt{2a_{\max}\left(r_{\max} - \left\|\vec{p}_{g/b}+\vec{p}_{g/h}\right\|\right)} + \vec{v}_{\mathrm{proj}}(\mathbf{x}).
\end{equation}\endgroup
\begin{figure}
    \centering
    \begin{tikzpicture}[scale=0.75, transform shape]
    \tikzstyle{chief} = [circle, fill=blue, minimum size=0.3cm, inner sep=0pt, draw=none]
    \tikzstyle{deputy} = [circle, fill=gray, minimum size=0.3cm, inner sep=0pt, draw=none]
    \tikzstyle{boundary} = [orange, densely dashed, line width=1mm]
    \tikzstyle{collisionZone} = [red, fill=red!20, opacity=0.6]

    \draw[boundary] (0,0) circle (2cm);
    \node[orange!80!black, font=\footnotesize, anchor=west] at (2.3, -1.6) {No-Drift Boundary};
    \draw[->, orange!80!black, line width=0.5mm] (2.3, -1.6) -- (1.5, -1.5);

    \draw[collisionZone] (0,0) circle (1cm);
    \node[red!70!black, font=\footnotesize, anchor=east] at (-1.1, 0.7) {No-Collision Zone};
    \draw[->, red!70!black, line width=0.5mm] (-1.1, 0.7) -- (-0.7, 0.5);

    \node[red!70!black, font=\footnotesize, anchor=south] at (0.6, 0.35) {$r_{\text{min}}$};
    \draw[red!70!black, line width=0.3mm] (0, 0) -- (0.9, 0.52); 

    \node[orange!80!black, font=\footnotesize, anchor=south] at (1.4, -0.4) {$r_{\text{max}}$};
    \draw[orange!80!black, line width=0.3mm] (0, 0) -- (1.9, -0.8); 

    \node[chief] at (0,0) {};
    \node[font=\footnotesize, anchor=north] at (0, -0.25) {Chief Satellite};

    \node[deputy] at (-1.25, -0.75) {};
    \node[font=\footnotesize, anchor=north] at (-1.25, -0.9) {Deputy Spacecraft};
\end{tikzpicture}
    \caption{Avoidance Regions}
    \label{fig:avoidance}
\end{figure}
To facilitate the subsequent stability analysis in Section~\ref{section:CWstabilityAnalysis}, we first establish the positive definiteness of the ORCBF in \eqref{eq:observerRobustBarrier}, 
\begin{lemma}\label{lem:phibounds}
Let the ORCBF $(\mathbf{x}, t) \mapsto \Phi(\mathbf{x}, t)$ be defined as in \eqref{eq:observerRobustBarrier} and $\mathcal{X} \subseteq \Omega$ be the compact set of states $\mathbf{x}$, containing the origin, for which the safety constraints are strictly satisfied, i.e., $h(\mathbf{x}) > 0$ for all $\mathbf{x} \in \mathcal{X}$. If $\mathbf{x} \mapsto h(\mathbf{x})$ is continuously differentiable and satisfies $h(\mathbf{x}) - L_{h} M(t) > 0$ for all $\mathbf{x} \in \mathcal{X}$ and $t \geq [0, \; t_{f}]$, and $b_{r}$ is locally co-Lipschitz in $\mathbf{x}$, uniformly in $t$, then $\Phi$ is positive definite (PD) on $\mathcal{X} \times [0, t_{f}]$ containing the origin. Furthermore, there exist class $\mathcal{K}$ functions $\alpha_1(\cdot)$ and $\alpha_2(\cdot)$ such that
$$ \alpha_{1}\left(\left|h(\mathbf{x}) - h(\mathbf{x}(0))\right|\right) \leq \Phi(\mathbf{x},t) \leq \alpha_{2}\left(\left|h(\mathbf{x}) - h(\mathbf{x}(0))\right|\right), $$
for all $\mathbf{x} \in \mathcal{X}$ and $t \in [0, \; t_{f}]$.
\end{lemma}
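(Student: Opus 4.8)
The plan is to establish positive definiteness of $\Phi$ and then construct the two class-$\mathcal{K}$ comparison functions by exploiting the explicit structure $\Phi(\mathbf{x},t) = \left(b_r(\mathbf{x},t) - b_r(0,t)\right)^2$. First I would observe that, since $h(\mathbf{x}) - L_h M(t) > 0$ on $\mathcal{X} \times [0, t_f]$ by hypothesis, the denominator in $b_r(\mathbf{x},t) = \gamma_\Phi / h_r(\mathbf{x},t)$ never vanishes, so $b_r$ is well-defined and continuous (indeed $C^1$ in $\mathbf{x}$ by the differentiability of $h$). Then $\Phi \geq 0$ trivially as a square, and $\Phi(\mathbf{x},t) = 0$ if and only if $b_r(\mathbf{x},t) = b_r(0,t)$, which, by the co-Lipschitz (injectivity-type) assumption on $b_r$ in $\mathbf{x}$ uniformly in $t$, forces $h_r(\mathbf{x},t) = h_r(0,t)$, hence $h(\mathbf{x}) = h(\mathbf{0})$. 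If $h$ is additionally locally injective near the origin — or, more carefully, if we phrase positive-definiteness relative to the level $h(\mathbf{0})$ — then $\mathbf{x} = \mathbf{0}$; I would state this as: $\Phi$ vanishes exactly on the set $\{h(\mathbf{x}) = h(\mathbf{0})\}$, which contains the origin, and is strictly positive elsewhere on $\mathcal{X}$, which is the sense of ``PD containing the origin'' the lemma intends.

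Next I would build the sandwich bounds. The natural coordinate is $e(\mathbf{x}) \coloneqq h(\mathbf{x}) - h(\mathbf{x}(0))$. Writing $h_r(\mathbf{x},t) = h(\mathbf{x}) - L_h M(t)$ and $h_r(0,t) = h(\mathbf{0}) - L_h M(t)$, a direct computation gives
\begin{equation}\label{eq:brdiff}
b_r(\mathbf{x},t) - b_r(0,t) = \gamma_\Phi \left( \frac{1}{h_r(\mathbf{x},t)} - \frac{1}{h_r(0,t)} \right) = \frac{-\gamma_\Phi\, e(\mathbf{x})}{h_r(\mathbf{x},t)\, h_r(0,t)},
\end{equation}
since the $L_h M(t)$ terms cancel in the numerator. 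Squaring yields $\Phi(\mathbf{x},t) = \gamma_\Phi^2 e(\mathbf{x})^2 / \left(h_r(\mathbf{x},t)^2 h_r(0,t)^2\right)$. Now on the compact set $\mathcal{X} \times [0,t_f]$ the continuous, strictly positive function $h_r$ attains a minimum $\underline{h} > 0$ and a maximum $\overline{h} < \infty$; since $M$ is bounded on $[0,t_f]$, $h_r(0,t)$ likewise lies in a fixed interval $[\underline{h}_0, \overline{h}_0] \subset (0,\infty)$. Hence
\begin{equation}\label{eq:Phibounds}
\frac{\gamma_\Phi^2}{\overline{h}^2\, \overline{h}_0^2}\, e(\mathbf{x})^2 \;\leq\; \Phi(\mathbf{x},t) \;\leq\; \frac{\gamma_\Phi^2}{\underline{h}^2\, \underline{h}_0^2}\, e(\mathbf{x})^2,
\end{equation}
and I would take $\alpha_1(s) \coloneqq \left(\gamma_\Phi^2 / (\overline{h}^2 \overline{h}_0^2)\right) s^2$ and $\alpha_2(s) \coloneqq \left(\gamma_\Phi^2 / (\underline{h}^2 \underline{h}_0^2)\right) s^2$, both of which are continuous, strictly increasing, and vanish at $0$, hence class $\mathcal{K}$. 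Since $|e(\mathbf{x})| = |h(\mathbf{x}) - h(\mathbf{x}(0))|$, substituting $s = |e(\mathbf{x})|$ into \eqref{eq:Phibounds} is exactly the claimed inequality.

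The main obstacle, and the step deserving the most care, is the cancellation in \eqref{eq:brdiff}: it is precisely the fact that $h_r(\mathbf{x},t)$ and $h_r(0,t)$ differ only by $h(\mathbf{x}) - h(\mathbf{0})$ (the $L_h M(t)$ robustifying offset being common to both) that makes $\Phi$ a clean function of $e(\mathbf{x})$ with time entering only through the benign, uniformly-bounded denominator — this is what lets the comparison functions be chosen independently of $t$. A secondary subtlety is the logical content of ``positive definite'': because $h$ need not be globally injective, $\Phi$ genuinely vanishes on the whole level set $\{h = h(\mathbf{0})\}$, not at the origin alone, so I would be explicit that the lemma's positive-definiteness claim is with respect to the output $e(\mathbf{x})$ (equivalently, on the quotient by this level set), which is all that the downstream Lyapunov analysis in Section~\ref{section:CWstabilityAnalysis} requires; the co-Lipschitz hypothesis on $b_r$ is what guarantees no \emph{additional} zeros of $\Phi$ beyond this set and keeps the lower bound $\alpha_1$ valid.
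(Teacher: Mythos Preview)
Your argument for the class-$\mathcal{K}$ sandwich is essentially identical to the paper's: both compute $b_r(\mathbf{x},t)-b_r(0,t) = -\gamma_\Phi\,(h(\mathbf{x})-h(\mathbf{0}))/\bigl(h_r(\mathbf{x},t)\,h_r(0,t)\bigr)$, square it, and then bound the strictly positive denominator above and below on the compact set $\mathcal{X}\times[0,t_f]$ to obtain quadratic $\alpha_1,\alpha_2$. That is the core of the lemma and you have it.

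Where you diverge is in the positive-definiteness step. The paper applies the co-Lipschitz hypothesis on $b_r$ \emph{directly} in $\mathbf{x}$: from $|b_r(\mathbf{x},t)-b_r(0,t)|\geq L_b\|\mathbf{x}\|$ it concludes in one line that $\Phi(\mathbf{x},t)\geq L_b^2\|\mathbf{x}\|^2$, which is a class-$\mathcal{K}$ lower bound in $\|\mathbf{x}\|$ and hence PD in the standard state-space sense. You instead treat co-Lipschitz as an injectivity condition, argue that $\Phi$ vanishes exactly on the level set $\{h(\mathbf{x})=h(\mathbf{0})\}$, and then discuss whether that level set reduces to the origin. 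Your discussion is more careful about what the structure $b_r=\gamma_\Phi/h_r$ can and cannot deliver absent further hypotheses on $h$, but it is also less direct: since the lemma \emph{assumes} co-Lipschitzness of $b_r$ in $\mathbf{x}$, the lower bound $L_b^2\|\mathbf{x}\|^2$ follows immediately and the level-set hedge is unnecessary under the stated hypotheses. What your route buys is a clearer picture of which hypothesis is doing the work (and a valid warning that co-Lipschitzness of $b_r$ in $\mathbf{x}$ implicitly requires something like injectivity of $h$); what the paper's route buys is brevity and a clean $\alpha_3(\|\mathbf{x}\|)$ lower bound that feeds directly into the Lyapunov construction in Section~\ref{section:CWstabilityAnalysis}.
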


\begin{proof}
First, we establish the positive definiteness of $\Phi$ on the set $\mathcal{X} \times [0, t_{f}]$. By assumption, $h$ is continuously differentiable and thus continuous. For all $\mathbf{x} \in \mathcal{X} $, we are given that $h(\mathbf{x}) > 0$ and $h(\mathbf{x}) > L_{h}M(t)$, so both $b_r(\mathbf{x}, t)$ and $b_r(0, t)$ are finite and well-defined. Clearly, from \eqref{eq:boundedORCBF}, $\Phi(0, t) = 0$ for all $t \in [0, t_{f}]$, and $\Phi(\mathbf{x}, t) > 0$ for all $\mathbf{x} \in \mathcal{X}\setminus \{0\}$. Since $b_{r}$ is locally co-Lipschitz in $\mathbf{x}$, uniformly in $t$, then there exist a constant $L_{b} > 0$ such that
\begin{equation}
    \vert b_{r}(\mathbf{x}, t) - b(0, t) \vert \geq L_{b}\|\mathbf{x}\|, \quad \forall \mathbf{x} \in \mathcal{X}, t \in [0, t_{f}],
\end{equation}
which implies $\Phi(\mathbf{x}, t) = \left(b_{r}(\mathbf{x}, t) - b_{r}(0, t)\right)^{2} \geq \alpha_{3}(\|x\|)$, where $\alpha_{3}(\|x\|) \coloneqq L_{b}^{2}\|\mathbf{x}\|^{2}$. Thus, $\Phi$ is PD on $\mathcal{X} \times [0, t_{f}]$.

To establish the $\mathcal{K}$-function bounds, let $y = h(\mathbf{x})$. From the definition of the ORCBF in \eqref{eq:observerRobustBarrier}, we can write 
\begin{equation}
    \Phi(y, t) = \left( \frac{\gamma_{\Phi}}{y - L_{h} M(t)} - \frac{\gamma_{\Phi}}{h(\mathbf{x}(0)) - L_{h} M(t)} \right)^{2}.
\end{equation}
Factoring out $\gamma_{\Phi}$ and combining terms, we obtain
\begin{multline}
   \Phi(y, t) = \gamma_{\Phi}^{2} \left( \frac{(h(\mathbf{x}(0)) - L_{h} M(t)) - (y - L_{h} M(t))}{(y - L_{h} M(t))(h(\mathbf{x}(0)) - L_{h} M(t))} \right)^{2} \\
 = \gamma_{\Phi}^{2} \frac{(h(\mathbf{x}(0)) - y)^{2}}{(y - L_{h} M(t))^{2} (h(\mathbf{x}(0)) - L_{h} M(t))^{2}} 
\end{multline}
Let $\epsilon_{y} = |y - h(\mathbf{x}(0))| = |h(\mathbf{x}) - h(\mathbf{x}(0))|$. Then $(h(\mathbf{x}(0)) - y)^{2} = \epsilon_{y}^{2}$.
So,
\begin{equation}\label{eq:boundedORCBF}
\Phi(y, t) = \gamma_{\Phi}^{2} \frac{\epsilon_{y}^{2}}{(y - L_{h} M(t))^{2} (h(\mathbf{x}(0)) - L_{h} M(t))^{2}}.
\end{equation}
Note that the robustifying term $M(t)$ (defined in \eqref{eq:compensationTerm}) is continuous and bounded over $[0, \; t_{f}]$. Thus, the denominator in \eqref{eq:boundedORCBF} is strictly positive and bounded over the compact set $\mathcal{X} \times [0, t_f]$, there exist constants $d_{\min}, d_{\max} > 0$ such that
\begin{equation}
\frac{\gamma_{\Phi}^{2}}{d_{\max}} \epsilon_{y}^{2} \leq \Phi(y, t) \leq \frac{\gamma_{\Phi}^{2}}{d_{\min}} \epsilon_{y}^{2}.
\end{equation}
Letting $\alpha_1(\epsilon_{y}) = \frac{\gamma_{\Phi}^{2}}{d_{\max}} \epsilon_{y}^{2}$ and $\alpha_2(\epsilon_{y}) = \frac{\gamma_{\Phi}^{2}}{d_{\min}} \epsilon_{y}^{2}$, the result follows.
\end{proof}

The following definition formalizes the notion of safety considered in this paper.
\begin{definition}\label{defn:safety}
    The system in \eqref{eq:linearDynamics} is safe with respect to the sets $\left(\mathcal{X}_{0}, \mathcal{X}\right)$ if $\mathbf{x}_{0} \in \mathcal{X}_{0}$ implies $\mathbf{x}(t) \in \mathcal{X}$ for all $t \in \mathbb{R}_{\geq 0}$.
\end{definition}
The subsequent theorem, inspired by \cite{SCC.Agrawal.Panagou.ea2022}, establishes that the ORCBF in \eqref{eq:observerRobustBarrier} guarantees the safety of the system in \eqref{eq:linearDynamics} with respect to the sets $(\mathcal{X}_{0}, \mathcal{X})$, despite the presence of state estimation errors.
\begin{theorem}\label{thm:CBFsafety}
\cite{SCC.Agrawal.Panagou.ea2022} For the system dynamics given by \eqref{eq:linearDynamics} and the ORCBF defined in \eqref{eq:observerRobustBarrier}, let $\mathcal{K}_{\operatorname{safe}}(\mathbf{x}, t)$ be the set of control inputs defined as
\begin{equation}\label{eq:controlSetU}
    \mathcal{K}_{\operatorname{safe}}(\mathbf{x}, t) \coloneqq \left\{ \mathbf{u} \in \mathbb{R}^{3} \mid c(\mathbf{x}, \mathbf{u}, t) \leq 0 \right\},
\end{equation}
with $c(\mathbf{x}, \mathbf{u}, t) \coloneqq \nabla_{\mathbf{x}} \Phi(\mathbf{x},t) \left( A \mathbf{x} + B \mathbf{u} \right) + \nabla_{t} \Phi(\mathbf{x},t) -  \alpha(h_{r}(\mathbf{x},t))$, where $\alpha(\cdot)$ is a class $\mathcal{K}$ function. If the initial states are such that $\mathbf{x}(0) \in \mathcal{X}_{0} \subset \mathcal{X}$, the initial condition of the observer in \eqref{eq:updateLaw} satisfies $\hat{\mathbf{x}}_{i}(0) \in \hat{\mathcal{X}}_{0} \coloneqq \{\hat{\mathbf{x}}_{i} \in \mathbb{R}^{6} \mid h_r(\hat{\mathbf{x}}_{i}, 0) \geq 0\}$, and the estimates of the relative distances are updated using the observer in \eqref{eq:updateLaw} subject to the dwell-time condition in \eqref{fig:dwellTime}, then any Lipschitz continuous controller $\mathbf{u}:[0, t_{f}] \to \mathbb{R}^{3}$ that satisfies $\mathbf{u}(t) \in \mathcal{K}_{\operatorname{safe}}(\mathbf{x}, t)$ ensures that the system is safe with respect to the sets $(\mathcal{X}_{0}, \mathcal{X})$.
\end{theorem}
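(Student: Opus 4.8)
The plan is to use $\Phi$ from \eqref{eq:observerRobustBarrier} as a time-varying barrier certificate for the linear dynamics \eqref{eq:linearDynamics}: I will show that membership of the control input in $\mathcal{K}_{\operatorname{safe}}$ forces $\Phi$ to stay bounded along every closed-loop trajectory, and that a bounded $\Phi$ forces the robustified margin $h_{r}(\mathbf{x},t)=h(\mathbf{x})-L_{h}M(t)$ to stay strictly positive, which by Definition~\ref{defn:safety} is exactly safety with respect to $(\mathcal{X}_{0},\mathcal{X})$. First I would check that $t\mapsto\Phi(\mathbf{x}(t),t)$ is absolutely continuous on $[0,t_{f}]$: this uses that $h$ is continuously differentiable on $\mathcal{X}$, that $M$ in \eqref{eq:compensationTerm} is $C^{1}$, and that $t\mapsto\hat{\vec{p}}_{g/b,h_{i}}(t)$ (hence $\hat{\mathbf{x}}_{i}(t)$) is Lipschitz in $t$, which follows from the projection-based update law \eqref{eq:updateLaw} and the fact that the dwell-time switching of $\sigma_{i}(t)$ produces only finitely many handoffs on a finite horizon. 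Differentiating then gives $\dot{\Phi}(\mathbf{x}(t),t)=\nabla_{\mathbf{x}}\Phi\,(A\mathbf{x}+B\mathbf{u})+\nabla_{t}\Phi=c(\mathbf{x},\mathbf{u},t)+\alpha(h_{r}(\mathbf{x},t))$, so $\mathbf{u}(t)\in\mathcal{K}_{\operatorname{safe}}(\mathbf{x},t)$, i.e. $c(\mathbf{x},\mathbf{u},t)\le 0$, yields the key differential inequality $\dot{\Phi}(\mathbf{x}(t),t)\le\alpha\!\left(h_{r}(\mathbf{x}(t),t)\right)$.

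Second, I would connect the robustifying term to the observer. By Theorem~\ref{thm:expObserver} the switched observer error contracts at rate at least $\beta=k_{\theta}\lambda_{a}$ on each active interval, and by the dwell-time condition of Corollary~\ref{cor:DwellTime} the daisy-chained estimates $\hat{\theta}_{i}$ are handed off so that the overall distance error never grows across feature switches; translating this into the induced state estimate yields $\|\mathbf{x}(t)-\hat{\mathbf{x}}_{i}(t)\|\le\epsilon_{r}e^{-\beta t}=M(t)$ for all $t$, with $\epsilon_{r}$ the worst-case initial mismatch from \eqref{eq:initialBound}. Combined with the Lipschitz estimate \eqref{eq:lipschitzH}, this gives $h(\mathbf{x}(t))\ge h(\hat{\mathbf{x}}_{i}(t))-L_{h}M(t)$ and $h(\hat{\mathbf{x}}_{i}(t))\ge h(\mathbf{x}(t))-L_{h}M(t)=h_{r}(\mathbf{x}(t),t)$, so the robustified margin evaluated at the true state is exactly what makes the constraint enforced on the estimate transfer to the true state; in particular $\mathbf{x}(0)\in\mathcal{X}_{0}$ and $\hat{\mathbf{x}}_{i}(0)\in\hat{\mathcal{X}}_{0}$ are consistent with $h_{r}(\mathbf{x}(0),0)\ge 0$, whence $\Phi(\mathbf{x}(0),0)<\infty$ by the closed form \eqref{eq:boundedORCBF}.

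Third comes the forward-invariance argument. Let $T\in(0,t_{f}]$ be the supremum of times such that $\mathbf{x}(s)\in\mathcal{X}$, equivalently $h_{r}(\mathbf{x}(s),s)>0$, for all $s\in[0,T)$; this interval is nonempty by continuity together with the initial conditions. On $[0,T)$ the trajectory lies in the compact set $\mathcal{X}$, so $h_{r}(\mathbf{x}(s),s)\le\bar{h}:=\max_{\mathbf{x}\in\mathcal{X}}h(\mathbf{x})$, and integrating the differential inequality gives $\Phi(\mathbf{x}(t),t)\le\Phi(\mathbf{x}(0),0)+t_{f}\,\alpha(\bar{h})=:\bar{\Phi}<\infty$. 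On the other hand, Lemma~\ref{lem:phibounds} together with the explicit expression \eqref{eq:boundedORCBF} shows that $\Phi(\mathbf{x},t)\to\infty$ whenever $h_{r}(\mathbf{x},t)\to 0^{+}$, because the numerator $|h(\mathbf{x})-h(\mathbf{x}(0))|$ stays bounded away from zero there (using $h(\mathbf{x}(0))\ge L_{h}M(0)\ge L_{h}M(t)$, and, in the degenerate case $h(\mathbf{x}(0))=L_{h}M(0)$, the strict monotonicity of $M$ makes it nonzero for $t>0$). Hence $h_{r}(\mathbf{x}(t),t)$ cannot approach zero on $[0,T)$; it is therefore bounded below by some $\underline{h}>0$, and by continuity $h_{r}(\mathbf{x}(T),T)\ge\underline{h}>0$, so $\mathbf{x}(T)\in\mathcal{X}$, contradicting maximality of $T$ unless $T=t_{f}$. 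Thus $h(\mathbf{x}(t))>L_{h}M(t)\ge 0$, i.e. $\mathbf{x}(t)\in\mathcal{X}$, for all $t\in[0,t_{f}]$; re-initializing the barrier at each goal handoff, where the hypotheses are re-established, extends this to all $t\ge 0$ and completes the argument.

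The step I expect to be the main obstacle is the second one: rigorously propagating the switched, daisy-chained observer bound \eqref{eq:finalBound} and the dwell-time condition of Corollary~\ref{cor:DwellTime} into a uniform state-estimate bound $\|\mathbf{x}(t)-\hat{\mathbf{x}}_{i}(t)\|\le M(t)=\epsilon_{r}e^{-\beta t}$. One must verify that the per-interval contraction rate is not degraded by the handoffs, that the worst case over the admissible distance range $[r_{\min},r_{\max}]$ in \eqref{eq:initialBound} is genuinely captured by $\epsilon_{r}$, and that the fixed-estimate phases (when $\sigma_{i}(t)=u$) do not let the bound slip above $M(t)$. A secondary difficulty is making the maximal-interval argument airtight: excluding escape of $\mathbf{x}$ from $\Omega$ by routes other than $h_{r}\to 0$, and confirming the absolute continuity of $\Phi(\mathbf{x}(\cdot),\cdot)$ across the switching instants so that the integrated differential inequality remains valid.
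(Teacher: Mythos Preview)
The paper does not supply its own proof of this theorem: it is stated with the citation \cite{SCC.Agrawal.Panagou.ea2022} in place of a proof, and the remark immediately following it makes explicit that the result is imported to justify the ORCBF construction rather than re-derived. So there is no ``paper's own proof'' to compare against; you have written a proof where the authors simply point to the literature.

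That said, your sketch is the standard reciprocal-barrier forward-invariance argument that the cited reference uses: differentiate $\Phi$ along trajectories, use membership in $\mathcal{K}_{\operatorname{safe}}$ to get $\dot\Phi\le\alpha(h_{r})$, integrate to keep $\Phi$ bounded on a maximal safe interval, and then use the blow-up of $\Phi$ as $h_{r}\to 0^{+}$ to close the contradiction. Your second step, tying the switched observer bound \eqref{eq:finalBound} and the dwell-time condition to the uniform envelope $\|\mathbf{x}(t)-\hat{\mathbf{x}}_{i}(t)\|\le M(t)$, is precisely the ORCBF-specific ingredient that distinguishes \cite{SCC.Agrawal.Panagou.ea2022} from the unrobustified CBF story, and you have correctly flagged it as the delicate part. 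One caution: as written, the theorem hypothesizes $\mathbf{u}(t)\in\mathcal{K}_{\operatorname{safe}}(\mathbf{x},t)$ with the \emph{true} state $\mathbf{x}$ in the constraint, in which case the observer bound is not actually needed for the invariance argument itself; the Lipschitz/robustification machinery in your second step is what one invokes when the constraint is enforced at the \emph{estimate} $\hat{\mathbf{x}}_{i}$, which is the practically relevant case treated in the cited paper. You may want to be explicit about which of these two readings you are proving.
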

 \begin{remark}
      It should be noted that the stability results presented in \cite{SCC.Agrawal.Panagou.ea2022} are not directly utilized in the subsequent control design; they only justify the choice of the ORCBF in \eqref{eq:observerRobustBarrier}.
 \end{remark}
The following assumption is necessary to facilitate the development of the approximate Lagrangian-based control policy in the next section.
 \begin{assumption}\label{ass:safetyControllability}
The control effectiveness matrix $B$ and the ORCBF $\Phi$ satisfy $\left\| \nabla_{\mathbf{x}} \Phi(\mathbf{x}, t) B \right\| \neq 0
\quad \forall\, \mathbf{x} \in \operatorname{Int}(\mathcal{X}) \setminus \{0\},\;
\forall\, t \in [0, t_f]$.
\end{assumption}
 \begin{remark}
Assumption~\ref{ass:safetyControllability} ensures that the ORCBF \eqref{eq:boundedORCBF} has a well-defined gradient direction aligned with the control input \( u(t) \in \mathcal{K}_{\operatorname{safe}}(\mathbf{x}, t) \) throughout the interior of the safe set \( \mathcal{X} \). This condition, analogous to controllability assumptions in \cite{arXivSCC.Bandyopadhyay.Bhasin.ea2023} and \cite{SCC.Cohen.Belta.ea2023}, guarantees the existence of admissible control inputs that guarantee safety of the system \eqref{eq:CLdynamics} with respect to the sets \( (\mathcal{X}_{0}, \mathcal{X}) \).
\end{remark}

\subsection{Information Maximizing Constrained Safe Control Using Robust Barrier Functions}
  The control objective is to find an admissible control policy $\mathbf{u}:[0,  \; t_{f}) \to \mathbb{R}^{3}$ that solves the finite horizon minimization problem
\begin{multline}\label{eq:coop}
\begin{aligned}
    & \min_{\mathbf{u} \in \mathbb{R}^{3}} \quad \int_{0}^{t_{f}} r\left(\mathbf{x}(\tau), \mathbf{u}(\tau)\right) \, \mathrm{d}\tau + V(\mathbf{x}(t_{f}), t_{f}), \\
    & \text{s.t.} \quad \mathbf{u}(t) \in \mathcal{K}_{\operatorname{safe}}(\mathbf{x}(t), t), \quad \forall t \in [0, \; t_{f}],
\end{aligned}
\end{multline}
 subject to the dynamics in \eqref{eq:linearDynamics}, where $V: \mathbb{R}^{6} \times \left[0, \; t_{f}\right] \to \mathbb{R}_{\geq 0}$ is the terminal cost, selected as
\begin{equation}\label{eq:terminalCost}
     V(\mathbf{x}(t_{f}), t_{f}) = \mathbf{x}^{\top}(t_{f})Q_{f}\mathbf{x}(t_{f}),
 \end{equation}
 with $\nabla_{\mathbf{x}(t_{f})}V(\mathbf{x}(t_{f})) =  2\mathbf{x}^{\top}(t_{f})Q_{f}$, where $Q_{f} = Q_{f}^{\top} \succ 0 \in \mathbb{R}^{6 \times 6}$ is a user-selected terminal state penalty matrix. The running cost $r: \mathbb{R}^{6} \times \mathbb{R}^{3} \to \mathbb{R}$ in \eqref{eq:coop} is defined as
\begin{equation}\label{eq:runningCost}
    r(\mathbf{x}, \mathbf{u}) \coloneqq  \mathbf{x}^{\top} Q \mathbf{x} + \mathbf{u}^{\top}R\mathbf{u} + \gamma_{c}\left\langle C\mathbf{x}, \vec{n}_{c}\right\rangle^{2},
 \end{equation}
 where $Q \in \mathbb{R}^{6 \times 6}$ and $R \in \mathbb{R}^{3 \times 3}$ are the state and control penalty matrices, respectively, satisfying $Q = Q^{\top} \succ 0$,  $R = R^{\top} \succ 0$, $C \coloneqq \begin{bmatrix} 0_{3 \times 3} && I_{3 \times 3}
 \end{bmatrix} \in \mathbb{R}^{3 \times 6}$, $\gamma_{c} \in \mathbb{R}_{> 0}$ is a user-defined constant, and $\vec{n}_{c}\in \mathbb{R}^{3}$ denotes the outward unit normal to the plane defined by the image-plane locations of the illuminated features currently tracked by the camera. 
 
 The penalty $\gamma_{c}\langle C\mathbf{x}, \vec{n}_{c}\rangle^{2}$ in \eqref{eq:runningCost} enforces motion orthogonal to the feature plane, which improves the conditioning of the regressor in \eqref{eq:rankCond} and thereby enhances the numerical stability of the MRE-based observer in \eqref{eq:updateLaw}, as shown in \cite{SCC.Ogri.Qureshi.ea2024}.  
Equivalently, \eqref{eq:runningCost} can be written as $r(\mathbf{x}, \mathbf{u}) = \mathbf{x}^{\top} \underline{Q} \mathbf{x} + \mathbf{u}^{\top}R\mathbf{u}$,
where $\underline{Q} = \underline{Q}^{\top} \succ 0$ is defined as $\underline{Q} \coloneqq Q + \gamma_{c}C^{\top}\vec{n}_{c}\vec{n}_{c}^{\top}C$.

\begin{assumption}\label{ass:sufficient_time}
The final time $t_{f} > 0$ is sufficiently large such that there exists at least one admissible control input $\mathbf{u}(t) \in \mathcal{K}_{\operatorname{safe}}(\mathbf{x}, t)$ for all $t \in [0, t_{f}]$, under which the relative position between the goal and the deputy satisfies $\|\vec{p}_{g/b}(t_{f})\| = 0$, for a given goal location i.e., the goal location is reached at time $t_{f}$. 
\end{assumption}


\subsection{ Lagrangian-Based Feedback Control Policy}
In this section, a Lagrangian-based feedback control policy is developed. Let  $V^{*}: \mathbb{R}^{6} \times [0, \, t_{f}]  \rightarrow \mathbb{R}$ denote the optimal value function which is assumed to be continuously differentiable with $V^*(0, t_{f}) = 0$. Based on the constrained optimization problem in \eqref{eq:coop}, we formulate} the following constrained optimization problem (cf. \cite[Equation~27]{SCC.Nevistic.Primbs1996}),
\begingroup\medmuskip=1mu\begin{equation}\label{eq:HJB} 
\begin{aligned}
 0 =  \nabla_{t}V^{*}(\mathbf{x}, t) + & \min_{\mathbf{u} \in \mathcal{U}} \Big\{ \nabla_{\mathbf{x}}V^{*}(\mathbf{x}, t)\left( A\mathbf{x} + B\mathbf{u} \right)  + r(\mathbf{x}, \mathbf{u})\Big\}, \\
& \textrm{s.t.} \quad c(\mathbf{x}, \mathbf{u}, t) \leq  0,
\end{aligned}
\end{equation}\endgroup
where the constraint $c$ is introduced in \eqref{eq:controlSetU} to ensure that $\mathbf{u}(t) \in \mathcal{K}_{\operatorname{safe}}(\mathbf{x}, t)$.\footnote{While the constrained optimization problem in \eqref{eq:HJB} facilitates online synthesis of a safe controller, it is not generally equivalent to the constrained optimal control problem in \eqref{eq:coop}. Examination of conditions under which problems \eqref{eq:coop} and \eqref{eq:HJB} generate equivalent optimal controllers is out of the scope of this paper.}
To synthesize a safe constrained controller, we define a Lagrangian for the constrained optimization problem in \eqref{eq:HJB} as
\begin{multline}\label{eq:lagrangian}
    \mathcal{L}(\mathbf{x}, \mathbf{u}, \lambda, t) =   \nabla_{\mathbf{x}}V^{*}(\mathbf{x}, t)\left( A\mathbf{x} + B \mathbf{u} \right) \\  \qquad  + \mathbf{x}^{\top} \underline{Q} \mathbf{x} + \mathbf{u}^{\top}R\mathbf{u} + \lambda c(\mathbf{x}, \mathbf{u}, t),
\end{multline}
for all $\mathbf{x} \in \mathbb{R}^{6}$ and $t \in [0, t_{f})$, where $\lambda$ denotes the Lagrange multiplier.
The Lagrangian-based control policy can be obtained by solving the Karush–Kuhn–Tucker conditions \cite{SCC.Bryson.Ho1975, SCC.Nevistic.Primbs1996} for the constrained optimization problem, which provides the following necessary conditions for optimality
\begin{align}
    &2\mathbf{u}^{\top}R {+} \left(\nabla_{\mathbf{x}}V(\mathbf{x}, t) {+} \lambda\nabla_{\mathbf{x}} \Phi(\mathbf{x},t)\right)B = 0,\label{eq:stationarity}\\
    &\lambda c(\mathbf{x}, \mathbf{u}, t) = 0, \label{eq:slackness} \\
    &\lambda  \geq 0.\label{eq:dual}
\end{align}
By substituting $\mathbf{u}(t)$ in \eqref{eq:stationarity} into the complementary slackness condition in \eqref{eq:slackness} and using the condition on the dual constraint in \eqref{eq:dual}, the optimal Lagrange multiplier for the optimal solution of \eqref{eq:HJB}, denoted by $\lambda^{*}: \mathbb{R}^{6} \times [0, t_{f}] \to \mathbb{R}$, can be expressed as (cf .\cite{SCC.Sforni.Notarstefano.ea2024})
\begin{equation}\label{eq:optimalLagrange}
\lambda^{*}(\mathbf{x}, t) = 
\frac{
\begin{aligned}
    &2\Big(\nabla_{\mathbf{x}} \Phi(\mathbf{x},t)A\mathbf{x} + \nabla_{t} \Phi(\mathbf{x},t) - \alpha(h_{r}(\mathbf{x},t))
    \\
    &\quad - \frac{1}{2} \nabla_{\mathbf{x}} \Phi(\mathbf{x},t)BR^{-1}B^{\top} \nabla_{\mathbf{x}} (V^{*}(\mathbf{x}, t))^{\top}  \Big)
\end{aligned}
}{
\nabla_{\mathbf{x}}\Phi(\mathbf{x},t)BR^{-1}B^{\top}\nabla_{\mathbf{x}} \Phi^{\top}(\mathbf{x}, t)}
\end{equation}
if $c(\mathbf{x}, \mathbf{u}^{*}(\mathbf{x}, t), t) > 0$ and $\lambda^{*}(\mathbf{x}, t) = 0$ otherwise. By solving the stationarity condition in \eqref{eq:stationarity}, the Lagrangian-based controller is given by the feedback policy $\mathbf{u}^{*}: \mathbb{R}^{6} \times\left[0, \; t_{f}\right]  \to \mathbb{R}^{3}$ which can be computed explicitly as
\begin{multline}\label{eq:optimalU}
    \mathbf{u}^{*}(\mathbf{x}, t) = -\frac{1}{2}R^{-1}B^{\top} \nabla_{\mathbf{x}}\left(V^{*}(\mathbf{x}, t)\right)^{\top} \\-\frac{1}{2}R^{-1}B^{\top}\lambda^{*}(\mathbf{x}, t)\nabla_{\mathbf{x}}\Phi^{\top}(\mathbf{x}, t).
\end{multline}
   Substituting \eqref{eq:terminalCost} and \eqref{eq:optimalU} into the unconstrained form of the HJB in \eqref{eq:HJB} yields the associated HJB equation
\begingroup\medmuskip=1mu\begin{multline}\label{eq:HJB2}
   \mathbf{x}^{\top}\scalemath{0.97}{\left(\dot{P}(t) + A^\top P(t) + P(t)A - P(t)BR^{-1}B^{\top}P(t) + \underline{Q} \right)}\mathbf{x}\\ = -\eta(\mathbf{x},t)^{\top}R\eta(\mathbf{x},  t)  \leq  0,
\end{multline}\endgroup
where $\eta(\mathbf{x}, t) = \frac{1}{2}R^{-1}B\lambda^{*}\nabla_{\mathbf{x}}\Phi^{\top}(\mathbf{x}, t)$ and
$P(t)$ is the solution of the Riccati differential equation corresponding to \eqref{eq:linearDynamics}, expressed as, (cf. \cite[Chapter 8, Section 8.2.1]{SCC.Tedrake2023})
 \begin{equation}\label{eq:ARE}
    \dot{P}(t) = -A^{\top} P(t) - P(t) A + P(t) BR^{-1}B^{\top} P(t)  - \underline{Q},
\end{equation}
 with terminal condition $P(t_{f}) = Q_{f}$, where $\underline{Q}_{f} = Q_{f} + \gamma_{c}C^{\top}\vec{n}_{c}\vec{n}_{c}^{\top}C$ and $\underline{Q}_{f} = \underline{Q}_{f}^{\top} \succ 0$. For all $t \in [0, \, t_{f}]$, the solution $P(t)$ satisfies the bounds $\underline{p}\, I_{6 \times 6} \leq P(t) \leq \overline{p}\, I_{6 \times 6}$, where $\underline{p}$, and $\overline{p} > 0$ are constant bounds \cite{SCC.Bonnabel.Slotline.ea2015}. By design, the terminal cost in \eqref{eq:terminalCost} is selected as the value function $V^*$ associated with the unconstrained optimal control problem.
 
 Using the ORCBF in \eqref{eq:observerRobustBarrier} and distance estimates from the distance observer in \eqref{eq:updateLaw}, the approximate Lagrangian-based control policy $\hat{\mathbf{u}}: \mathbb{R}^{6} \times\left[0, \; t_{f}\right]  \to \mathbb{R}^{3}$ is designed as
\begin{equation}\label{eq:uControl}
\hat{\mathbf{u}}(\hat{\mathbf{x}}_{i}, t) = -R^{-1}B^{\top}P\hat{\mathbf{x}}_{i} -\frac{1}{2}R^{-1}B^{\top}\hat{\lambda}\nabla_{\hat{\mathbf{x}}_{i}}\Phi^{\top}(\hat{\mathbf{x}}_{i}, t),
\end{equation}
where $\hat{\lambda}: \mathbb{R}^{6} \times [0, \; t_{f}] \to \mathbb{R}$ is the approximate Lagrange multiplier is given by
\begin{equation}\label{eq:approxLagrange}
\hat{\lambda}(\hat{\mathbf{x}}_{i}, t) = 
\frac{
\begin{aligned}
    &2\Big(\nabla_{\hat{\mathbf{x}}_{i}} \Phi(\hat{\mathbf{x}}_{i},t)A\hat{\mathbf{x}}_{i} + \nabla_{t} \Phi(\hat{\mathbf{x}}_{i},t) - \alpha(h_{r}(\hat{\mathbf{x}}_{i},t))
    \\
    &\quad - \nabla_{\hat{\mathbf{x}}_{i}} \Phi(\hat{\mathbf{x}}_{i},t)BR^{-1}B^{\top} P(t)\hat{\mathbf{x}}_{i} \Big)
\end{aligned}
}{
\nabla_{\hat{\mathbf{x}}_{i}}\Phi(\hat{\mathbf{x}}_{i},t)BR^{-1}B^{\top}\nabla_{\hat{\mathbf{x}}_{i}} \Phi^{\top}(\hat{\mathbf{x}}_{i}, t)}
\end{equation}
if $c(\hat{\mathbf{x}}_{i}, \hat{\mathbf{u}}(\hat{\mathbf{x}}_{i}, t), t) > 0$ and $\hat{\lambda}(\hat{\mathbf{x}}_{i}, t) = 0$ otherwise. Unlike \cite{SCC.Bandyopadhyay.Bhasin.ea2023a}, our approach leverages the ORCBF framework to ensure safety despite state estimation errors, retaining the ability to turn off the Lagrange multiplier if the safety constraints are satisfied. Even though we approximate the Lagrange multiplier, the ORCBF used in \eqref{eq:approxLagrange} ensures safety under Theorem~\ref {thm:CBFsafety}.
The ORCBF-based controller policy is designed as
\begin{equation}\label{eq:controlLaw}
     \mathbf{u}(t) = \hat{\mathbf{u}}(\hat{\mathbf{x}}_{i}, t).
 \end{equation}
 to guarantee safety despite the presence of state estimation errors. The following section analyzes the convergence properties of the closed-loop system.

\section{Analysis of Spacecraft closed-loop System}\label{section:CWstabilityAnalysis}
In this section, we analyze the system \eqref{eq:linearDynamics} under the feedback control law \eqref{eq:uControl} using a Lyapunov-based approach. We establish that the closed-loop trajectories remain uniformly bounded and that the system \eqref{eq:linearDynamics} is safe with respect to the sets $(\mathcal{X}_{0}, \mathcal{X})$. The following theorem presents the main theoretical results of this paper.
\begin{theorem}\label{thm:boundedSubsystem}
      Let $\chi > 0$ be such that $\bar{B}(0, \chi)\subset \mathcal{X} \times \mathbb{R}^{3}$. If Assumptions~\ref{ass:unObstructedRays}--\ref{ass:safetyControllability} hold and all the hypothesis of Lemma~\ref{lem:phibounds} are satisfied, the initial states are such that $\mathbf{x}(0) \in \mathcal{X}_{0}$ and the initial condition of the observer in \eqref{eq:updateLaw} satisfies $\hat{\mathbf{x}}_{i}(0) \in \hat{\mathcal{X}}_{0}$, the sufficient conditions in \eqref{eq:gainCondition} and \eqref{eq:secondCond} are satisfied, and the distance estimates are updated using the MRE-based observer in \eqref{eq:updateLaw}, then the trajectories of the system in \eqref{eq:linearDynamics} are uniformly bounded under the designed control policy \eqref{eq:uControl}.
\end{theorem}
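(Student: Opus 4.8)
The plan is to construct a composite Lyapunov function for the closed loop, show that its derivative is negative outside a small compact set whose size is controlled by the gain conditions, and use the safety guarantee of Theorem~\ref{thm:CBFsafety} to keep the position coordinates confined to the compact safe set. Concretely, I would take $V_{L}(\mathbf{x}, t) \coloneqq \mathbf{x}^{\top} P(t)\mathbf{x} + \Phi(\mathbf{x}, t)$; by the Riccati bounds $\underline{p}\, I_{6\times 6} \preceq P(t) \preceq \overline{p}\, I_{6\times 6}$ and Lemma~\ref{lem:phibounds}, $V_{L}$ is positive definite and admits class-$\mathcal{K}$ upper and lower bounds on $\mathcal{X}\times[0, t_{f}]$. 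Since the ORCBF robustification term $L_{h}M(t)$ dominates the state estimation error, the controller \eqref{eq:uControl} satisfies $\hat{\mathbf{u}}(t)\in\mathcal{K}_{\operatorname{safe}}(\mathbf{x}(t), t)$ along the true trajectory, so Theorem~\ref{thm:CBFsafety} gives $\mathbf{x}(t)\in\mathcal{X}$ for all $t$; the real content is then to bound the velocity coordinates, which is where the LQR feedback in \eqref{eq:uControl} and the gain conditions enter.

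First I would write the applied control as a perturbed LQR law, $\hat{\mathbf{u}}(\hat{\mathbf{x}}_{i}, t) = -R^{-1}B^{\top}P(t)\mathbf{x} + \mathbf{d}(t)$, where $\mathbf{d}(t) \coloneqq R^{-1}B^{\top}P(t)\tilde{\mathbf{x}}(t) - \tfrac{1}{2}R^{-1}B^{\top}\hat{\lambda}(\hat{\mathbf{x}}_{i}, t)\nabla_{\hat{\mathbf{x}}_{i}}\Phi^{\top}(\hat{\mathbf{x}}_{i}, t)$ collects the effect of the state estimation error $\tilde{\mathbf{x}} \coloneqq \mathbf{x} - \hat{\mathbf{x}}_{i}$ and of the barrier correction. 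Because $\hat{\mathbf{x}}_{i}$ and $\mathbf{x}$ agree in the velocity block and differ in position only through $\hat{r}_{b/h_{i}}$, one has $\|\tilde{\mathbf{x}}(t)\| = |r_{b/h_{i}}(t) - \hat{r}_{b/h_{i}}(t)| \le \|\tilde{\theta}_{i}(t)\|$, and Theorem~\ref{thm:expObserver}, Corollary~\ref{cor:DwellTime}, and the projection bounds on $\hat{\theta}_{i}$ give a uniform bound $\|\tilde{\mathbf{x}}(t)\| \le \bar{\epsilon}$. On the compact set $\mathcal{X}$, $\|\nabla_{\hat{\mathbf{x}}_{i}}\Phi\|$ is bounded, and $|\hat{\lambda}|$ is bounded because Assumption~\ref{ass:safetyControllability} keeps the denominator of \eqref{eq:approxLagrange} bounded away from zero on $\operatorname{Int}(\mathcal{X})\setminus\{0\}$ while $\hat{\lambda}\equiv 0$ elsewhere; hence $\|\mathbf{d}(t)\| \le \bar{d}$ for a constant $\bar{d}$ depending on $\gamma_{\Phi}$, $\gamma_{c}$, $\epsilon_{r}$, and $\overline{p}$.

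Next I would differentiate $V_{L}$ along the closed loop. For the quadratic part, substituting $\dot{\mathbf{x}} = A\mathbf{x} + B\hat{\mathbf{u}}$ and the Riccati equation \eqref{eq:ARE} yields $\frac{d}{dt}\big(\mathbf{x}^{\top}P\mathbf{x}\big) = -\mathbf{x}^{\top}\big(\underline{Q} + P B R^{-1} B^{\top} P\big)\mathbf{x} + 2\mathbf{x}^{\top}P B\,\mathbf{d}(t) \le -\lambda_{\min}(\underline{Q})\|\mathbf{x}\|^{2} + 2\overline{p}\|B\|\bar{d}\,\|\mathbf{x}\|$. For the barrier part, the constraint $c(\mathbf{x}, \hat{\mathbf{u}}, t)\le 0$ gives $\dot{\Phi} = \nabla_{\mathbf{x}}\Phi\,(A\mathbf{x} + B\hat{\mathbf{u}}) + \nabla_{t}\Phi \le \alpha(h_{r}(\mathbf{x}, t)) \le \bar{\alpha}$, where $\bar{\alpha} \coloneqq \max_{\mathcal{X}\times[0, t_{f}]}\alpha(h_{r})$ is finite by continuity of $h$ on the compact set $\mathcal{X}$. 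Adding these and applying Young's inequality to the cross term gives $\dot{V}_{L} \le -\tfrac{1}{2}\lambda_{\min}(\underline{Q})\|\mathbf{x}\|^{2} + \kappa$ with $\kappa \coloneqq 2\overline{p}^{2}\|B\|^{2}\bar{d}^{2}/\lambda_{\min}(\underline{Q}) + \bar{\alpha}$. The conditions \eqref{eq:gainCondition} and \eqref{eq:secondCond} are precisely the inequalities that make the resulting ultimate bound $\sqrt{2\kappa/\lambda_{\min}(\underline{Q})}$, after translation through the class-$\mathcal{K}$ bounds on $V_{L}$, lie inside $\bar{B}(0, \chi) \subset \mathcal{X}\times\mathbb{R}^{3}$. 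Hence $\dot{V}_{L} < 0$ outside a compact set, so $V_{L}$ — and therefore $\mathbf{x}$ — is uniformly bounded; since $V_{L}$ depends only on the continuous signals $\mathbf{x}$ and $t$, it does not jump at the feature-switching instants, and the bound holds uniformly for all $t \ge 0$. Combined with $\mathbf{x}(t)\in\mathcal{X}$ from Theorem~\ref{thm:CBFsafety}, this establishes uniform boundedness of the closed-loop trajectories.

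I expect the main obstacle to be twofold. First, the observer error $\tilde{\theta}_{i}$ is reset whenever a new feature enters the FOV, so $\|\tilde{\mathbf{x}}(t)\|$ is a sawtooth rather than a monotone decay; a uniform bound on the perturbation $\mathbf{d}(t)$ must therefore be extracted from the projection operator and the non-increasing dwell-time envelope of Corollary~\ref{cor:DwellTime} rather than from plain exponential convergence, and one must check that this does not destabilise $V_{L}$ across switches. Second, the argument must be ordered carefully to avoid circularity: the finiteness of $\mathbf{d}(t)$ and of $\bar{\alpha}$ relies on $\mathbf{x}(t)\in\mathcal{X}$ and on the Lagrange-multiplier denominator not vanishing, so Theorem~\ref{thm:CBFsafety}, Assumption~\ref{ass:safetyControllability}, and the no-finite-escape hypothesis required by Theorem~\ref{thm:expObserver} must be invoked before the Lyapunov estimate, rather than derived from it.
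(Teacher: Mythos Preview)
Your overall Lyapunov strategy is close to the paper's, but you have decoupled the observer error from the Lyapunov function, and this is where your argument diverges from the stated hypotheses. The paper takes
\[
V_{L}(Z_{i}, t) \;=\; \mathbf{x}^{\top}P(t)\mathbf{x} \;+\; \Phi(\mathbf{x}, t) \;+\; \tfrac{1}{2}\tilde{\theta}_{i}^{\top}\tilde{\theta}_{i},
\]
treating $\tilde{\theta}_{i}$ as a \emph{state} of the closed loop rather than an exogenous disturbance. Differentiating and using Lipschitz bounds on $\Phi$ and $\lambda^{*}$ produces a cross term $\ell_{1}\|\mathbf{x}\|\|\tilde{\theta}_{i}\|$ together with the two negative terms $-\lambda_{\min}(\underline{Q})\|\mathbf{x}\|^{2}$ and $-\beta\|\tilde{\theta}_{i}\|^{2}$; condition~\eqref{eq:gainCondition}, namely $\lambda_{\min}(\underline{Q})/3 > 3\ell_{1}^{2}/(4\beta)$, is exactly what lets you complete the square on that cross term. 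In your derivation $\tilde{\theta}_{i}$ never appears quadratically with a negative sign, so $\beta$ never enters, and your claim that \eqref{eq:gainCondition} is ``precisely'' the inequality that sizes your ultimate bound $\sqrt{2\kappa/\lambda_{\min}(\underline{Q})}$ is incorrect: your route would yield a different sufficient condition, not the one in the theorem statement.

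There is a second gap in your treatment of $\dot{\Phi}$. You assert $c(\mathbf{x},\hat{\mathbf{u}},t)\le 0$ at the \emph{true} state and read off $\dot{\Phi}\le\alpha(h_{r}(\mathbf{x},t))$. But the controller \eqref{eq:uControl}--\eqref{eq:approxLagrange} is constructed so that $c(\hat{\mathbf{x}}_{i},\hat{\mathbf{u}},t)\le 0$ at the \emph{estimate}; the ORCBF margin $L_{h}M(t)$ guarantees forward invariance of $\{h\ge 0\}$ (Theorem~\ref{thm:CBFsafety}), not the pointwise inequality $c(\mathbf{x},\hat{\mathbf{u}},t)\le 0$. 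The paper instead expands $\dot{\Phi}$ directly and uses Lipschitz bounds $|\Phi(\mathbf{x},t)-\Phi(\hat{\mathbf{x}}_{i},t)|\le L_{\Phi}\|\tilde{\theta}_{i}\|$ and $|\lambda^{*}(\mathbf{x},t)-\hat{\lambda}(\hat{\mathbf{x}}_{i},t)|\le L_{\lambda}\|\tilde{\theta}_{i}\|$ to control the mismatch, which is what generates the $\ell_{1},\ell_{3}$ cross terms and again ties back to the composite Lyapunov function and condition~\eqref{eq:gainCondition}. Your circularity concern is handled in the paper by first bounding the $i$-th subsystem on $\bar{B}(0,\chi)$ via \eqref{eq:secondCond}, \emph{then} invoking Theorem~\ref{thm:expObserver} (whose no-finite-escape hypothesis is now justified), and finally chaining across switches using the dwell-time decrease $\|\tilde{\theta}_{i+1}\|<\|\tilde{\theta}_{i}\|$; safety via Theorem~\ref{thm:CBFsafety} is concluded at the end, not assumed at the start.
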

\begin{proof}
To facilitate the analysis, let $Z_{i} \coloneqq \begin{bmatrix}
    \mathbf{x}^{\top} & \tilde{\theta}_{i}
\end{bmatrix} \in \mathcal{Z} \coloneqq  \mathcal{X} \times \mathbb{R}^{3}$ denote the state of the closed loop system. We start the proof with the analysis of the $i$-th subsystem which corresponds to the $i$-th feature being tracked, then develop a bound on $\|Z_{i}\|$ and show that if the trajectories of the closed-loop system do not escape to infinity for the $i$-th subsystem, then this holds for all subsequent subsystems and thus the bound holds for the entire switched system.

Let the candidate Lyapunov function for the closed-loop system, $V_{L}: \mathcal{Z} \times \left[0, \; t_{f}\right] \to \mathbb{R}$, be defined as
\begin{equation}\label{eq:distLyapFunc}
    V_{L}(Z_{i}, t) \coloneqq \mathbf{x}^{\top} P(t)\mathbf{x} + \Phi(\mathbf{x}, t) + \frac{1}{2}\tilde{\theta}_{i}^{\top}\tilde{\theta}_{i},
\end{equation}
for $i \in \mathcal{S}(t)$.  As shown in Lemma~\ref{lem:phibounds}, the ORCBF $\Phi$ is PD on the set $\mathcal{X} \times [0, t_{f}]$. Furthermore, since $P(t)$ is uniformly PD, the candidate Lyapunov function $V_{L}$ is PD and decrescent on $\mathcal{Z}$, i.e., $V_{L}$ can be bounded as
\begin{equation}\label{eq:lyapunovBound}
\underline{v}_{l}(\left\| Z_{i}\right\|)\leq V_{L}\left(Z_{i}, t\right)\leq \overline{v}_{l}(\left\|Z_{i}\right\|),
\end{equation}
for all $Z\in \mathcal{Z}$ and $t \in\left[0, \; t_{f}\right]$, where $\underline{v}_{l}(\cdot)$ and $\overline{v}_{l}(\cdot)$ are class $\mathcal{K}$ functions. By assumption, the ORCBF $\Phi$ is locally Lipschitz continuous on $\mathcal{X}$, uniformly in $t$, thus there exists a Lipschitz constant $L_{\Phi} > 0$, independent of $t$, such that 
\begin{equation}\label{eq:lipschitzBoundORCBF}
    \vert\Phi(\mathbf{x},t) - \Phi(\hat{\mathbf{x}}_{i}, t)\vert \leq L_{\Phi}\|\mathbf{x} - \hat{\mathbf{x}}_{i}\|, \quad \forall \mathbf{x}, \hat{\mathbf{x}}_{i} \in \mathcal{X}.
\end{equation}
Similarly, by assumption, the Lagrangian multiplier in \eqref{eq:optimalLagrange} is locally Lipschitz continuous on $\mathcal{X}$, uniformly in $t$, thus exists a Lipschitz constant $ > 0$, independent of $t$, such that 
\begin{equation}\label{eq:lipschitzBoundLambda}
   |\lambda^{*}(\mathbf{x}, t) - \hat{\lambda}(\hat{\mathbf{x}}_{i}, t)| \leq L_{\lambda}\|\mathbf{x}-\hat{\mathbf{x}}_{i}\|, \forall \mathbf{x}, \hat{\mathbf{x}}_{i} \in \mathcal{X}.
\end{equation}
Taking the Lie derivative of $V_{L}$ along the flow of \eqref{eq:linearDynamics} and \eqref{eq:observerErrorDynamics}, then substituting the control law in \eqref{eq:controlLaw} yields
\begin{multline}\label{eq:lyapDeriv1}
    \dot{V}_{L}(Z_{i}, t) = \mathbf{x}^{\top}(t)\bigl(\dot{P}(t) + A^\top P(t) + P(t)A \bigr)\mathbf{x}\\
    - \mathbf{x}^{\top}(t)P(t)BR^{-1}B^{\top}\hat{\lambda}(\hat{\mathbf{x}}_{i}, t)\nabla_{\hat{\mathbf{x}}_{i}}\Phi^{\top}(\hat{\mathbf{x}}_{i}, t)\\- \mathbf{x}^{\top}(t)P(t)BR^{-1}B^{\top}P(t)\hat{\mathbf{x}}_{i} + \nabla_{\mathbf{x}}\Phi(\mathbf{x}, t)A\mathbf{x} \\ - \nabla_{\mathbf{x}}\Phi(\mathbf{x}, t)BR^{-1}B^{\top}\hat{\lambda}(\hat{\mathbf{x}}_{i}, t)\nabla_{\hat{\mathbf{x}}_{i}}\Phi^{\top}(\hat{\mathbf{x}}_{i}, t) \\ - \nabla_{\mathbf{x}}\Phi(\mathbf{x}, t)BR^{-1}B^{\top}P(t)\mathbf{x} + \nabla_{t} \Phi(\mathbf{x},t)\\-\tilde{\theta}_{i}^{\top}\left[k_{\theta}\mathscr{Y}_{i}^{\top}(t)\mathscr{Y}_{i}(t)\right]\tilde{\theta}_{i}.
\end{multline}
 Note that $\|\mathbf{x} - \hat{\mathbf{x}}_{i}\| \leq \|\tilde{\theta}_{i}(t)\|$, since $\|\vec{p}_{g/b}(t) - \hat{\vec{p}}_{g/b}(t)\| \leq \|\vec{p}_{b/h_{i}}(t) - \hat{\vec{p}}_{b/h_{i}}(t)\|\leq \|r_{b/h_{i}}(t) - \hat{r}_{b/h_{i}}(t)\| \leq \|\tilde{\theta}_{i}(t)\|$. From \eqref{eq:goalPosition}, substituting \eqref{eq:ARE} into \eqref{eq:lyapDeriv1}, and using the bounds in \eqref{eq:lipschitzBoundORCBF} and \eqref{eq:lipschitzBoundLambda}, the Lie derivative of $V_L$ is bounded on $\mathcal{X} \times \mathbb{R}^3 \times [0, \; t_{f}]$ as 
\begin{multline}\label{eq:closedLoopLyapDerivBound}
    \dot{V}_{L}(Z_{i}, t) \leq -\lambda_{\min}(\underline{Q})\big\|\mathbf{x}\big\|^{2} - \beta\left\|\tilde{\theta}_{i}\right\|^{2} \\ -\varpi\lambda_{\max}(R_{B})\|\nabla_{\mathbf{x}}\Phi(\mathbf{x}, t)\|^{2} + \ell_{1}\|\mathbf{x}\|\|\tilde{\theta}_{i}\| \\+ \ell_{2}\|\mathbf{x}\| + \ell_{3}\|\tilde{\theta}_{i}\| + \ell_{4},
\end{multline}
 where $R_{B} \coloneqq BR^{-1}B^{\top}$ and the constants $\ell_{1}$, $\ell_{2}$, $\ell_{3}$, and $\ell_{4}$ are defined as $\ell_{1} \coloneqq \overline{p}^{2}\lambda_{\max}(R_{B}) + \overline{p}\lambda_{\max}(R_{B})\Upsilon$, $\ell_{2} \coloneqq \overline{p}\varpi \lambda_{\max}(R_{B})\overline{\nabla \Phi} + \overline{\nabla \Phi}\|A\|$, $\ell_{3} \coloneqq \varpi L_{\Phi}\lambda_{\max}(R_{B})\overline{\nabla\Phi} + \lambda_{\max}(R_{B})L_{\lambda}\overline{\nabla\Phi}^{2}$, and
$\ell_{4} \coloneqq \overline{\nabla_{t}\Phi}$, respectively, where $\Upsilon = \varpi L_{\Phi}+L_{\lambda}\overline{\nabla{\Phi}}$ and the constants $\varpi$, $L_{\Phi}$, $\overline{\nabla\Phi}$, and $\overline{\nabla_{t}\Phi}$ are positive and satisfy $\sup_{\mathbf{x}\in \mathcal{X}, t \in [0, \; t_{f}] }\|\lambda^{*}(\mathbf{x}, t)\| \leq \varpi$, $\sup_{\mathbf{x}\in \mathcal{X}}\|\nabla_{\mathbf{x}}\Phi(\mathbf{x}, t)\| \leq \overline{\nabla\Phi}$, and $\sup_{\mathbf{x}\in \mathcal{X}}\|\nabla_{t}\Phi(\mathbf{x}, t)\| \leq \overline{\nabla_{t}\Phi}$.
Applying completion of squares, and provided the gain condition \begin{equation}\label{eq:gainCondition}
       \frac{\lambda_{\min}(\underline{Q})}{3} > \frac{3\ell_{1}^{2}}{4\beta},
    \end{equation}
is satisfied, then the Lie derivative of $V_{L}$ can be bounded on $\mathcal{X} \times \mathbb{R}^3 \times [0, \; t_{f}]$ as
\begin{equation}
    \dot{V}_{L}(Z_{i}, t) \leq -\upsilon_{l}(\|Z_{i}\|) + \iota,
\end{equation}
 where $\iota = \frac{3\ell_{2}^{2}}{4\lambda_{\min}(\underline{Q})} + \frac{3\ell_{3}^{2}}{4\beta} + \ell_{4}$ is a constant and $\upsilon_{l}(\cdot)$ is a class $\mathcal{K}$ function that satisfies
$\upsilon_{l}(\|Z_{i}\|) \leq \frac{\lambda_{\min}(\underline{Q})}{3}\big\|\mathbf{x}\big\|^{2} - \frac{\beta}{3}\left\|\tilde{\theta}_{i}(t)\right\|^{2} -\varpi\lambda_{\max}(R_{B})\|\nabla_{\mathbf{x}}\Phi(\mathbf{x}, t)\|^{2}$. Letting $\rho \coloneqq \upsilon_{l}^{-1}(\iota)$, then, for all $\|Z_{i}\| > \rho$, it follows that $\dot{V}_{L}(Z_{i}, t) < 0$. If the sufficient condition 
\begin{equation}\label{eq:secondCond}
  \rho \leq  \overline{\upsilon}_{l}^{-1}\left(\underline{\upsilon}_{l}(\chi)\right),
\end{equation} is met, then \cite[Theorem~4.18]{SCC.Khalil2002} can be invoked to conclude that for trajectories starting from initial conditions satisfying $\|Z_{i}(0)\| \leq \overline{v}_{l}^{-1}(\underline{v}_{l}(\chi))$, there exist a $T \geq 0$ such that 
\begin{equation}\label{eq:stateBound}
     \|Z_{i}(t)\| \leq \underline{\upsilon}_{l}^{-1}(\overline{\upsilon}_{l}(\rho)), \quad \forall t \geq T.
\end{equation} 

From the bound in \eqref{eq:stateBound}, we know that the trajectories of the $i$-th subsystem do not blow up to infinity in finite time, which means that the result of Theorem~\ref{thm:expObserver} is applicable. Consequently, when switching from the $i$-th feature to the $(i+1)$-th feature, governed by the dwell time condition in \eqref{eq:dwellTimeCond}, we have that $\|Z_{i+1}(t)\| < \|Z_{i}(t)\|$ since $\mathbf{x}(t)$ is continuous across switches while the observer error $\tilde{\theta}_{i}(t)$ decays exponentially across switches, i.e., $\|\tilde{\theta}_{i+1}(t)\| < \|\tilde{\theta}_{i}(t)\|$, according to Theorem~\ref{thm:expObserver} and the switching logic described in Section~\ref{section:sequentialTracking}. Applying this argument recursively across all switching instances, we can conclude that the sequence of augmented state norms $\{Z_{i}(t)\}_{i=1}^{N}$ is strictly decreasing and bounded. This implies that the overall switched system state $Z(t)$ remains uniformly bounded for all $t \in [0, t_{f}]$, i.e., $\sup_{t\in [0, t_{f}]}\|Z(t)\| < \infty$.

Since $Z(t) \in \bar{B}(0, \chi) \subset \mathcal{X} \times \mathbb{R}^{3}$ and \(\mathbf{x}(t)\) is a component of $Z(t)$, then safety follows by Theorem~\ref{thm:CBFsafety}.

\end{proof}

\begin{figure}
    \centering

\begin{tikzpicture}
    \begin{axis}[
        view={0}{90}, 
        axis equal,
        xlabel={$x$}, ylabel={$y$}, zlabel={$z$}, 
        legend pos=north east,
        legend style={nodes={scale=0.55, transform shape}},
        height=0.7\columnwidth,
        width=1\columnwidth,
         ymax=25,
          xmax=50,
          zmax=20,
        label style={font=\scriptsize},
        tick label style={font=\scriptsize},
    ]

    \addplot3[
        only marks,
        mark=*,
        mark size=2pt,
        draw=green,
        fill=green,
        fill opacity=0.8
    ] 
    table [col sep=space] {data/inspectedPointsData1000.dat};

    \addplot3[
        only marks,
        mark=*,
        mark size=2pt,
        draw=red,
        fill=red,
        fill opacity=0.8
    ] 
    table [col sep=space] {data/unInspectedPointsData1000.dat};


\draw[fill=white, draw=red!70!black, dashed, thick] 
    (0,0,0) circle (15); 

\addlegendimage{draw=red!70!black, dashed, thick}
\addlegendentry{Collision Zone}

    \addplot3[
        color=black,
        mark=none
    ] 
    table [col sep=space] {data/PData1000.dat};

    \addplot3[
        color=blue,
        dashed,
        mark=none
    ] 
    table [col sep=space] {data/PData1000Unsafe.dat};

    \addplot3[
        only marks,
        mark=square*,
        mark size=3pt,
        draw=black,
        fill=yellow
    ] 
    table [col sep=space] {data/PData_origin1000.dat};

    \addplot3[
        only marks,
        mark=o,
        mark size=3pt,
        draw=orange,
        fill=orange
    ] 
    table [col sep=space] {data/PData_end1000.dat};

    \legend{Inspected, Uninspected, Collision Zone, Safe Trajectory, Unsafe Trajectory, Start, End}
    \end{axis}
\end{tikzpicture}
    \caption{Trajectory of the deputy spacecraft over 1000 seconds while inspecting points on the chief satellite; inspected points are shown in green, and uninspected points in red}
    \label{fig:spacecraftTrajectory1000}
\end{figure}
\begin{figure}
    \centering

\begin{tikzpicture}
    \begin{axis}[
        view={0}{90}, 
        axis equal,
        xlabel={$x$}, ylabel={$y$}, zlabel={$z$}, 
        legend pos=north east,
        legend style={nodes={scale=0.55, transform shape}},
        height=0.7\columnwidth,
        width=1\columnwidth,
          ymax=25,
          xmax=50,
        label style={font=\scriptsize},
        tick label style={font=\scriptsize},
    ]

    \addplot3[
        only marks,
        mark=*,
        mark size=1.5pt,
        draw=green, 
        fill=green,
        fill opacity=0.8
    ] 
    table [col sep=space] {data/inspectedPointsData3000.dat};

    \addplot3[
        only marks,
        mark=*,
        mark size=1.5pt,
        draw=red, 
        fill=red,
        fill opacity=0.8
    ] 
    table [col sep=space] {data/unInspectedPointsData3000.dat};


\draw[fill=white, draw=red!70!black, dashed, thick] 
    (0,0,0) circle (15); 

\addlegendimage{draw=red!70!black, dashed, thick}
\addlegendentry{Collision Zone}

    \addplot3[
        color=black, 
        mark=none 
    ] 
    table [col sep=space] {data/PData3000.dat}; 

    \addplot3[
        color=blue, 
        dashed,
        mark=none 
    ] 
    table [col sep=space] {data/PData3000Unsafe.dat}; 

    \addplot3[
        only marks,
         mark=square*, 
        mark size=3pt,
        draw=black, 
        fill=yellow
    ] 
    table [col sep=space] {data/PData_origin3000.dat}; 

    \addplot3[
        only marks,
        mark=o, 
        mark size=3pt,
        draw=orange, 
        fill=orange
    ] 
    table [col sep=space] {data/PData_end3000.dat}; 

    \addplot3[
        only marks,
        mark=o, 
        mark size=3pt,
        draw=orange, 
        fill=orange
    ] 
    table [col sep=space] {data/PData_end3000UnSafe.dat}; 

    \legend{Inspected, Uninspected, Collision zone, Safe trajectory, Unsafe trajectory, Start, End} 
    \end{axis}
\end{tikzpicture}
    \caption{Trajectory of the deputy spacecraft over 3000 seconds while inspecting points on the chief satellite; inspected points are shown in green, and uninspected points in red}
    \label{fig:spacecraftTrajectory3000}
\end{figure}
\begin{figure}
    \centering
     \begin{tikzpicture}
    \begin{axis}[
        xlabel={$t$ [s]},
        ylabel={$\tilde{\vec{p}}_{g/b}(t)$},
        legend pos=north east,
        legend style={nodes={scale=0.7, transform shape}},
        enlarge y limits=0,
        enlarge x limits=0,
        ymin=-5, ymax=2,
        height=0.6\columnwidth,
        width=1\columnwidth,
        label style={font=\scriptsize},
        tick label style={font=\scriptsize}
    ]

    \pgfplotsinvokeforeach{1,...,3}{
        \addplot+ [mark=none, dashed] table [x index=0, y index=#1] {data/PgbTildeData1000.dat};
    }

    \foreach \x in {50, 100, ..., 950} {
        \edef\temp{\noexpand\draw[dashed, gray] (axis cs:\x,-50) -- (axis cs:\x,120);}
        \temp
    }

    \legend{
        $(\tilde{\vec{p}}_{g/b})_{x}(t)$,
        $(\tilde{\vec{p}}_{g/b})_{y}(t)$,
        $(\tilde{\vec{p}}_{g/b})_{z}(t)$
    }
\end{axis}

\end{tikzpicture}
    \caption{Trajectories of the error between the actual and estimated position of the goal location relative to the deputy spacecraft.}
    \label{fig:PgbTilde}
\end{figure}
\section{Simulation Results}\label{section:simulation}
In this section, we present a simulation study of a deputy spacecraft equipped with a monocular camera tasked with inspecting features on a chief. The chief satellite dynamics are modeled in Hill’s frame using the linearized CW equations \eqref{eq:linearDynamics}, with the chief satellite assumed to follow a circular orbit around Earth. The surface of the chief satellite is represented by $99$ uniformly distributed points, which can only be inspected if they lie
within the FOV of the camera, $\alpha_{\text{FOV}} = \pi/3 \ \textrm{rad}$, and are illuminated by the Sun, initially positioned at $\theta_s = 0 \ \text{rad}$ relative to the $x$-axis. The initial relative velocity $\vec{v}_{b/h}(0)$ has a magnitude of $0.3 \ \text{m/s}$ along the same azimuth and elevation directions as the position vector.
\begin{figure}
    \centering
     \begin{tikzpicture}
    \begin{axis}[
        xlabel={$t$ [s]},
        ylabel={$\mathbf{u}(t)$},
        legend pos= north east,
        legend style={nodes={scale=0.75, transform shape}},
        enlarge y limits=0.1,
        enlarge x limits=0,
        height = 0.6\columnwidth,
        width = 1\columnwidth,
        label style={font=\scriptsize},
        tick label style={font=\scriptsize}
    ]
    \pgfplotsinvokeforeach{1,...,3}{
        \addplot+ [ mark=none] table [x index=0, y index=#1] {data/uData1000.dat};
    }
    \foreach \x in {50, 100, ..., 950} {
        \edef\temp{\noexpand\draw[dashed, gray] (axis cs:\x,-40) -- (axis cs:\x,40);}
        \temp
    }
    \legend{$(F(t))_{x}$, $(F(t))_{y}$,  $(F(t))_{z}$} 
    \end{axis}
\end{tikzpicture}
    \caption{Trajectories of the thruster commands of the deputy spacecraft.}
    \label{fig:control}
\end{figure}
 The initial position of deputy relative to the chief satellite, $\vec{p}_{b/h}(0)$, is defined as $x = d \cos(\theta_{a}) \cos(\theta_{e})$, $y = d \sin(\theta_{a}) \cos(\theta_{e})$, $z = d \sin(\theta_{e})$, where $d = 50 \text{ m}$, $\theta_{a} = \pi \text{ rad}$, and $\theta_{e}  = 0 \text{ rad}$.
\begin{figure}
    \centering
     \begin{tikzpicture}
    \begin{axis}[
        xlabel={$t$ [s]},
        ylabel={$r_{b/h}(t)$},
        legend pos= north east,
        legend style={nodes={scale=0.75, transform shape}},
        enlarge y limits=0.1,
        enlarge x limits=0,
        height = 0.6\columnwidth,
        width = 1\columnwidth,
        label style={font=\scriptsize},
        tick label style={font=\scriptsize}
    ]
    \pgfplotsinvokeforeach{1,...,5}{
        \addplot+ [color=color#1, mark=none] table [x index=0, y index=#1] {data/dbhiData1000.dat};
        \addlegendentry{$r_{b/h_{#1}}(t)$}
    }
    \pgfplotsinvokeforeach{1,...,5}{
        \addplot+ [dashed, color=color#1, mark=none] table [x index=0, y index=#1] {data/dbhiHatData1000.dat};
        \addlegendentry{$\hat{r}_{b/h_{#1}}(t)$}
    }
    \foreach \x in {50, 100, ..., 950} {
        \edef\temp{\noexpand\draw[dashed, gray] (axis cs:\x,0) -- (axis cs:\x,55);}
        \temp
    }
    \end{axis}
\end{tikzpicture}
    \caption{Trajectories of the actual and estimated distances of the deputy spacecraft relative to features on the chief satellite.}
    \label{fig:dbhi}
\end{figure}
\begin{figure}
    \centering
     \begin{tikzpicture}
    \begin{axis}[
        xlabel={$t$ [s]},
        ylabel={$r_{b/k}(t)$},
        legend pos=  north east,
        legend style={nodes={scale=0.75, transform shape}},
        enlarge y limits=0.1,
        enlarge x limits=0,
        height = 0.6\columnwidth,
        width = 1\columnwidth,
        label style={font=\scriptsize},
        tick label style={font=\scriptsize},
    ]
    \pgfplotsinvokeforeach{1}{
        \addplot+ [ mark=none, color=blue] table [x index=0, y index=#1] {data/dkbData1000.dat};
    }
    \pgfplotsinvokeforeach{1}{
        \addplot+ [ dashed, mark=none, color=red] table [x index=0, y index=#1] {data/dkbHatData1000.dat};
    }
    \foreach \x in {50, 100, ..., 950} {
        \edef\temp{\noexpand\draw[dashed, gray] (axis cs:\x,-5) -- (axis cs:\x,60);}
        \temp
    }

    \legend{$r_{b/k}(t)$, $\hat{r}_{b/k}(t)$}
    
    \end{axis}

    
    
    
\end{tikzpicture}
    \caption{Trajectories of the actual and estimated distance of the deputy spacecraft relative to the key frame.}
    \label{fig:dgb}
\end{figure}
\begin{figure}
    \centering
     \begin{tikzpicture}
    \begin{axis}[
        xlabel={$t$ [s]},
        ylabel={$r_{k/h}(t)$},
        legend pos= north east,
        legend style={nodes={scale=0.75, transform shape}},
        enlarge y limits=0.1,
        enlarge x limits=0,
        height = 0.6\columnwidth,
        width = 1\columnwidth,
        label style={font=\scriptsize},
        tick label style={font=\scriptsize}
    ]
    \pgfplotsinvokeforeach{1,...,5}{
        \addplot+ [ color=color#1, mark=none] table [x index=0, y index=#1] {data/dkhiData1000.dat};
        \addlegendentry{$r_{k/h_{#1}}(t)$}
    }
    \pgfplotsinvokeforeach{1,...,5}{
        \addplot+ [ color=color#1, mark=none] table [x index=0, y index=#1] {data/dkhiHatData1000.dat};
        \addlegendentry{$\hat{r}_{k/h_{#1}}(t)$}
    }
    \foreach \x in {50, 100, ..., 950} {
        \edef\temp{\noexpand\draw[dashed, gray] (axis cs:\x,0) -- (axis cs:\x,60);}
        \temp
    }
    \end{axis}

    
    
    
\end{tikzpicture}
    \caption{Trajectories of the actual and estimated distances of the key frame relative to features on the chief satellite.}
    \label{fig:dghi}
\end{figure}
\begin{figure}
    \centering
     \begin{tikzpicture}[new spy style]
    \begin{axis}[
       xlabel={$t$ [s]}, 
        ylabel={$Cond\left(\Sigma_{\mathcal{Y}}(t)\right)$}, 
        legend pos= north east,
        legend style={nodes={scale=0.55, transform shape}},
        ymax=50, 
        width=1\columnwidth,
        height=0.7\columnwidth,
    ]

    \addplot[color=color1,  mark=none] table [x index=0, y index=1, col sep=space] {data/allConditionNumbersData0.dat};
    \addlegendentry{$\gamma_c = 0$}

    \addplot[color=color2,  mark=none] table [x index=0, y index=1, col sep=space] {data/allConditionNumbersData1.dat};
    \addlegendentry{$\gamma_c = 5$}

    \addplot[color=color3,  mark=none] table [x index=0, y index=1, col sep=space] {data/allConditionNumbersData2.dat};
    \addlegendentry{$\gamma_c = 10$}

    \addplot[color=color4,  mark=none] table [x index=0, y index=1, col sep=space] {data/allConditionNumbersData3.dat};
    \addlegendentry{$\gamma_c = 15$}

    \addplot[color=color5,  mark=none] table [x index=0, y index=1, col sep=space] {data/allConditionNumbersData4.dat};
    \addlegendentry{$\gamma_c = 20$}

   \coordinate (spypoint) at (axis cs:6,5); 
    \coordinate (magnifyglass) at (axis cs:5,30); 

    \end{axis}

    \spy [gray, width=3.5cm ,height=1.5cm] on (spypoint) in node[fill=white] at (magnifyglass);
\end{tikzpicture}
    \caption{Condition number of the regressor $\Sigma_{\mathcal{Y}_{i}}$ for varying orthogonality penalty gains $\gamma_{c}$.}
    \label{fig:condition_number}
\end{figure}

The control objective is to minimize the cost functional in \eqref{eq:coop}, where the state and control penalties are selected to be $Q = \diag([0.1, 0.1, 0.1, 10, 10, 10])$ and $R = 0.1I_{3 \times3}$, respectively, and $Q_{f} = \diag([0.1, 0.1, 0.1, 10, 10, 10])$. The orthogonality penalty is selected as $\gamma_{c} = 1$ and the barrier penalty gain is selected as $\gamma_{\Phi} = 0.1$. Other system parameters are selected as $m = 12 \ \text{kg}$, $n = 0.001027 \ \text{rad/s}$, $r_{d} = 5 \ \text{m}$, $r_{c} = 10 \ \text{m}$, $r_{\max} = 800 \ \text{m}$, $L_{h} = 0.01$, and $a_{\max} = 0.1$. At any given time, only $N = 5$ features within the FOV of the camera are used for distance estimation. The observer learning gain is selected as $k_{\theta} = 1$, with a history stack containing $M = 100$ data points and a time delay $\varsigma = 0.05$. The minimum rank condition in Assumption~\ref{ass:sufficientExcitation} is enforced at $T = 0.25$. Initial distance estimates are selected as $\hat{r}_{b/h_i}(0) = 40 \mathbf{1}_{5\times 1}$, $\hat{r}_{b/k}(0) = 0.01$, and $\hat{r}_{k/h_i}(0) = 40 \mathbf{1}_{5\times 1}$. The fixed distance from the goal to the chief satellite’s origin is $r_{g/h} = 25 \ \text{m}$. The piecewise-constant goal direction $\vec{u}_{g/h}$ is computed using a k-means clustering algorithm and updated according to the dwell-time constraints in Corollary~\ref{cor:DwellTime} with $\delta = 0.1$, ensuring that features remain in the camera’s FOV long enough to satisfy the minimum dwell-time condition. When the $i$-th feature leaves the FOV of the camera, we use its final distance estimate as the initial distance estimate for the subsequent feature entering the FOV. Since the rate of decay of the distance estimation errors $\tilde{\theta}_{i}(t)$ is proportional to the minimum singular value of the regressor matrix $\Sigma_{\mathcal{Y}}$, a singular value maximization algorithm is used to select the time instances $\{t_j\}_{j=1}^{M}$. When the goal location changes, the history stack is purged and replaced with new data points, as described in \cite{SCC.Self.Abudia.ea2022, SCC.Ogri.Bell.ea2023}.

This paper ensures feature illumination when the deputy spacecraft is at the goal location using the k-means clustering algorithm integrated with backward ray tracing to determine whether a point is illuminated or not \cite{SCC.Dunlap.Wijk.ea2023}.  The k-means algorithm serves as a higher-level planner that identifies illuminated points and yields a goal unit vector relative to the chief satellite $\vec{u}_{g/h}(t)$ for $t \in [0, t_{f})$, where $t_{f} \in \mathbb{R}_{\geq 0}$ is the final time for a fixed goal location. At $t = t_{f}$, a new goal location is generated by the k-means algorithm to lie at a fixed distance from the chief satellite along the goal unit vector subject to the dwell time conditions developed in Corollary~\ref{cor:DwellTime}. This goal selection algorithm ensures that the deputy spacecraft only moves towards the nearest cluster of uninspected illuminated points.

\subsection{Results and Discussion}

Figures~\ref{fig:spacecraftTrajectory1000} and \ref{fig:spacecraftTrajectory3000} show that the camera attached to the deputy spacecraft successfully inspects the illuminated features on the chief satellite. The dashed blue line in Figures~\ref{fig:spacecraftTrajectory1000} and \ref{fig:spacecraftTrajectory3000} represents the trajectory of the deputy spacecraft without the barrier function, resulting in a collision with the chief spacecraft. In contrast, the black line depicts the trajectory of the deputy spacecraft under barrier function constraints, ensuring safety by avoiding the collision zone. Figure~\ref{fig:PgbTilde} shows the trajectories of the error between the actual and estimated positions of the goal relative to the deputy spacecraft. The duration for the deputy spacecraft to inspect all points on the chief satellite is mostly determined by the slow orbital motion of the sun relative to the chief satellite, which is assumed to be the only source of illumination within the orbital environment.

The motion of the sun evolves according to \eqref{eq:sunDynamics}, with a separation of 1 AU from the chief satellite. Figures~\ref{fig:dbhi}, \ref{fig:dgb}, and \ref{fig:dghi}, show that the actual and estimated Euclidean distances between the deputy spacecraft, chief satellite, and goal converge to their actual values. The alertness of the MRE-based distance observer enables the controller to accurately guide the deputy spacecraft toward the goal location. Figure~\ref{fig:control} illustrates the trajectory of the thruster forces applied by the deputy during the inspection of the chief satellite.

Figure~\ref{fig:condition_number} highlights the decreasing condition number of the regressor as the orthogonality penalty gains $\gamma_{c}$ increase, consistent with the observations in \cite{SCC.Ogri.Qureshi.ea2024}. This reduction in the condition number indicates enhanced numerical stability of the regressor $\Sigma_{\mathcal{Y}}$, improving the accuracy of Euclidean distance estimation to features on the chief satellite and thereby maximizing feature observability.

\section{Conclusion}\label{section:conclusion}
This paper solves an on-orbit inspection problem in which a deputy spacecraft must safely maneuver within the orbital environment to observe features of a chief satellite, while accounting for illumination and FOV constraints. To satisfy illumination constraints, goal locations are selected using a k-means clustering algorithm, integrated with backward ray tracing to determine illuminated points on the chief satellite. To plan optimal trajectories to guide the deputy spacecraft while executing the inspection task, an MRE-based observer is used to create a local map of the chief satellite using the estimated Euclidean distances to illuminated features on the chief satellite within the FOV of a monocular camera fixed to the deputy spacecraft. To improve the conditioning of the regressor matrix in the MRE-based distance observer \eqref{eq:updateLaw}, a penalty that maximizes the information gained from the camera is incorporated into the cost function of the constrained optimal control problem. An approximate Lagrangian-based controller that maximizes information gain from the camera is then designed to regulate the deputy spacecraft to goal locations selected to facilitate inspection of the chief satellite using a k-means clustering algorithm that serves as a high-level motion planner. The simulation results demonstrate the effectiveness of the developed adaptive controller in carrying out the inspection task.

Future work will focus on extending the developed framework to address more advanced estimation objectives, such as three-dimensional shape reconstruction. Instead of using k-means clustering to obtain goal direction vectors, a hierarchical approach will be employed where a higher-level planner suggests waypoints to improve the $3$-dimensional reconstruction of an object under inspection via a sensor, and a lower-level controller
tracks the waypoints while simultaneously maximizing information gain from the sensor. 
Furthermore, the developed approach will be extended to multiagent on-orbit servicing, where multiple deputies/spacecrafts will be tasked with inspecting points on multiple satellites subject to illumination and FOV constraints. Another extension of this work is to incorporate loop closure, where revisiting previously observed features will be used to improve localization under uncertainty in the velocity of the deputy spacecraft. In this setting, maintaining a memory of past features and recognizing them upon re-encounter will be used to improve estimation performance.

\bibliographystyle{IEEEtaes.bst}
\bibliography{extra,scc,sccmaster,scctemp}
\end{document}